\documentclass[12pt]{article}

\usepackage[T1]{fontenc} 
\usepackage{amssymb,amsmath,amsthm,latexsym}
\usepackage{graphicx,float}
\usepackage[linesnumbered,ruled,vlined]{algorithm2e}
\usepackage{url}
\usepackage{tikz}

\newtheorem{definition}{Definition}
\newtheorem{lemme}{Lemma}
\newtheorem{example}{Example}

\author{  Stefi Nouleho \and Dominique Barth \and Franck Quessette \and Marc-Antoine Weisser \and Dimitri Watel \and Olivier David \\
 \normalsize DAVID, University of Versailles-Saint-Quentin \\
 \normalsize LRI, CentraleSupelec, Paris-Saclay University \\
 \normalsize ENSIIE, 1 square de la Resistance, 91025, Evry, France \\
 \normalsize SAMOVAR, Telecom SudParis, Evry \\
 \normalsize ILV, Department of Chemistry, University of Versailles, France 
}

\title {A new graph modelisation for molecule similarity}
\date{}

\begin{document}
\maketitle
\begin{abstract}

In order to define the process of restrosynthesis of a new organic molecule, it is often necessary to be able to draw inspiration from that of a  molecule similar to the target one of which we know such a process. To compute such a similarity, an oftently used approach is to solve a Maximum Common Edge Subgraph (MCES) problem \cite{Raymond2002} on molecular graphs, but such an approach is limited by computation time and pertinence of similarity measurement.  In this paper, we define and analyse here a new graph representation of molecules to algorithmically compare them. The purpose is to model the structure of molecule by a graph smaller than the molecular graph and representing the interconnexion of its elementary cycles. We provide an algorithm to efficiently obtain such a graph of cycles from a molecular graph. Then by solving MCES problems on those graphs, we evaluate the pertinence of using graphs of cycles for molecular similarity on a select set of molecules.

\end{abstract}

\section{Introduction}

In organic chemistry, when a new molecule is designed, it is necessary to determine chemical reactions that can be used to synthesize this target molecule from available compounds.  For this purpose, the goal we focus on here is to provide algorithms to help determine a sequence of reactions for the synthesis of new molecules,\textit{ i.e.} a tree of reactions whose root is the target molecule and  leaves are the available possibly initial molecules (commercialised or easily synthesizable).  In order to find such chemical reactions, chemists approach is to search in a reaction database (such as REAXYS \cite{reaxys} or CHEBI \cite{chebi}) for a molecule that is structurally close to the target molecule. And then from a chemical reaction of this similar molecule they draw inspiration reaction tree to propose such a chemical reaction for the target molecule. To help such a processus, it is therefore a question of being able to algorithmically select molecules in a reaction database that are structurally similar to a target molecule.

Considering a modeling of molecules by graphs or hypergraphs, several definitions and similarity approaches between molecules have already been studied \cite{Raymond2002}, mainly due to the principle stating that structurally similar molecules are expected to display similar properties \cite{Zager2008,Johnson1990},  or to help virtual screening for drug design  \cite{Eckert2007}. One of these approaches consists in  measuring distances of weighted editions between two molecular graphs, an edition being an operation of adding or deleting a vertex or a link in such a graph, or the label change of a vertex. These approaches are notoriously used in the field of bioinformatics \cite{Neuhaus2007,Sayle2015}. Another approach considers the kernel pattern of molecular graphs or hypergraphs, \textit{ie} the presence or not of sub-graphs (also called "fingerprints" \cite{Cereto2015,Bender2009}) in a set of determined patterns, close to cycles or trees, related to the functional properties of molecules; this approach seems well suited to the classification of molecules according to the properties concerned \cite{Grave2010,Gauzere2015}, but the choice of a significant set of substructures to compare molecules is ofently a difficult problem. Finally, a last approach considers the resolution of the problem of finding a Maximum Common Edge Subgraph\cite{Raymond2002}  (MCES) between two graphs. This problem is NP-complete and is initially seen as a generalization of graph isomorphism, with different metrics evaluating the size of this subgraph compared to those of the two graphs compared\cite{Faisal2007,Eckert2007,Zager2008,Akutsu2013}. It is a variant of this last approach that we will considered in this article.

In the context of this paper, we focus on the similarity of the structural configurations of two molecules. Such a  configuration is seen as the interconnexion of the cycles in the maximum 2-connected induced subgraph of the molecular graph. We assume that similar molecules may certainly have similar cyclic parts. A representation of the structure of a molecule based on the cycles it contains has already been proposed and used to classify and characterize sets of molecules \cite{Gauzere2013,Horvarth2004}. Here, we propose a definition of a cycle graph of a molecule modeling not only a relevant subset of the molecule cycles but also their interconnection, whether they share vertices or not; such a representation can also be seen as the extension of a reduction of the Markush structure of a molecule into a ring/non-ring reduction scheme leading to express the core structure of a molecule \cite{Lynch1996}, for example to make classification \cite{Gillet1991}. Our objective is to confirm that this definition of cycle graph corresponds sufficiently to the intuitive approach followed by a chemist and that the comparison of the graphs of cycles, based on a specific MCES, corresponds well to the notion of similarity of molecules wished.

The rest of the paper is organised as follow. In the next section, we give some preliminar definitions about graph theory and molecular graphs. Then in Section 3, we define the graph of cycle of a molecule and we propose an algorithm to efficiently obtain it for any given molecule. Finally, in Section 4, we evaluate the perfomances of using such graph of cycles (in terms of time computation and pertinence) to measure the similarity of pairs of molecules.
\section{Molecular graph}

In this paper, we consider definitions and notations on graph theory form Berge \cite{Berge1963}. An usual representation of a molecule is a molecular graph \cite{Gasteiger2003}. A molecular graph is an undirected labeled graph $G=(V,E)$ encoding the structural and functional information of the molecule. The set of vertices $V$ of $G$ encodes atoms and the set of edges $E$ encodes the adjacency relationship between atoms in the molecule. Each vertex is labeled by the corresponding chemical element (for example $C =$ Carbon, $H =$ Hydrogen) and each edge is labeled by its type of covalent bond (single $-$, double $=$, triple, aromatic).

Since hydrogen atoms can be connected at least to one atom, they can be omitted in the representation of a molecule (see Figure \ref{fig:graph}). A molecular graph does not encode neither the relative spatial arrangement of atoms nor the distance between atoms.

\begin{figure}[H]
\begin{center}
\includegraphics[scale = 0.41]{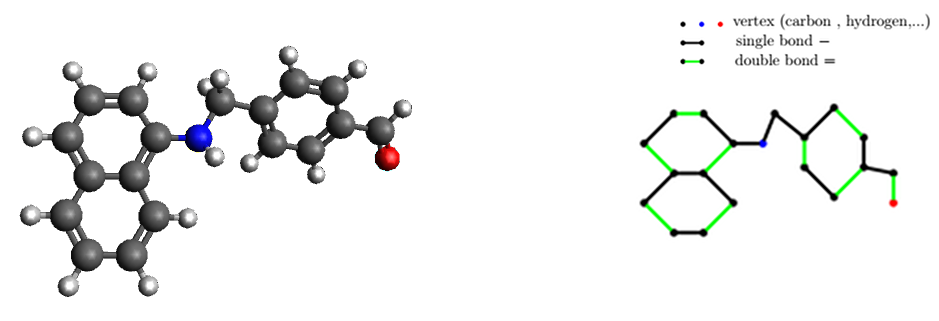}
\caption{Example of a molecule and its corresponding molecular graph.}
\label{fig:graph}
\end{center}
\end{figure}

\subsection{Preliminary }

We consider a simple and undirected labeled graph $G=(V,E)$ with $n = |V|$ the number of vertices and $m = |E|$ for the number of edges in $E = \{e_1, e_2, ..., e_m\}$.

An elementary cycle $c$ can be represented by a vector $v_c = (e_1^c,e_2^c,...,e_m^c)$ where $e_i^c =1$ iff the edge $e_i^c$ belongs to $c$ otherwise $e_i^c= 0$. The\textbf{ \textit{length of a cycle}} $c$ is the number of edges that belongs to the cycle $|c| = \sum e_i^c$.

\begin{definition} 
Let us consider two cycles $c_1$ and $c_2$ of vectors $v_{c_1}  =(e_1^{c_1}, e_2^{c_1},$\ $ ..., e_m^{c_1})$ and $v_{c_2} = (e_1^{c_2},e_2^{c_2},...,e_m^{c_2}) $. The \textbf{union of cycles} $c_1$ and $c_2$  with the boolean operator XOR (symbol $\oplus$) is $c_{12} = c_1 \oplus c_2 $ such that the vector $v_{c_{12}} = (e_1^{c_1} \oplus e_1^{c_2},e_2 ^{c_1}\oplus e_2^{c_2},...,e_m^{c_1}\oplus e_m^{c_2})$.
\end{definition} 

Since  $c_1$ and $c_2$ are elementary cycles, then the union of $c_1$ and $c_2$ is an union of edge-disjoint cycles by definition of $\oplus$.

A $2-$connected component is a maximal (in terms of inclusion) $k-$con\-nected induced subgraph with $k \geq 2$.

\begin{definition}
An \textbf{isthmus} is an edge of $G$ whose deletion increases its number of connected components. An edge is an isthmus if it is not contained in any cycle of $G$.
\end{definition}

 An \textit{isthmus-free graph} is a graph that does not have any isthmus. If a graph $G$ has $p$ isthmus then its number of $2-$connected components $K$ is such that $K-1 \leq p$; each connected component of a bridgeless graph is $2-$edge-connected. The $2-$connected components in a graph are connected in $G$ by  isthmus-chains (a chain which all edges are isthmus).

\section{Cycle structure of a molecular graph}

 In this section, the goal is to define a molecular representation which encodes the interconnection between the cyclic parts of the molecule. We assume that the cyclic part ( $k-$connected component with $k \geq 2$) describes the structure and the acyclic part describes chemical properties of the molecule. So, similar molecules may certainly have similar cyclic parts. 

This cyclic structure of molecular graph is based on the interconnection of its induced cycles. However, we do not compute and represent in the graph of cycles all the elementary cycles as there can be an exponential number of such cycles. In order to get a compact representation of the molecule cycles, we can use minimum cycle bases\cite{Horton1987} of the graphs.

\begin{definition}
A generator $\zeta$ is a set of cycles such that for each cycle $c$ of $G$ there is a set of cycles $c_1,c_2,...,c_k$ in $ \zeta $ such that $c = c_1 \oplus c_2 \oplus  ... \oplus c_k$.
\end{definition}

The weight of a generator is the sum of the lengths of its cycles. We denote $\zeta^i$ the generator of cycles with weight equal to or lower than $i$.

\begin{definition} 
A cycle basis of a graph is a minimal generator in terms of inclusion. 
\end{definition}

A minimum cycle basis is a cycle basis with a minimum weight. Note that, for a graph  we can have more than one minimum cycle basis. It is difficult to choose a canonical cycle basis to represent the interconnection of cycles because of the non uniqueness of cycle basis in a graph (see Figure \ref{fig:basis}). This means that depending on the choosen algorithm to compute a minimum cycle basis and the vertex labelling, two isomorphic graphs may have different cycle basis as results. Thus, we cannot only refer to one cycle basis to decide on the similarity between molecules.

\begin{figure}[H]
\begin{center}
\includegraphics[scale=0.67]{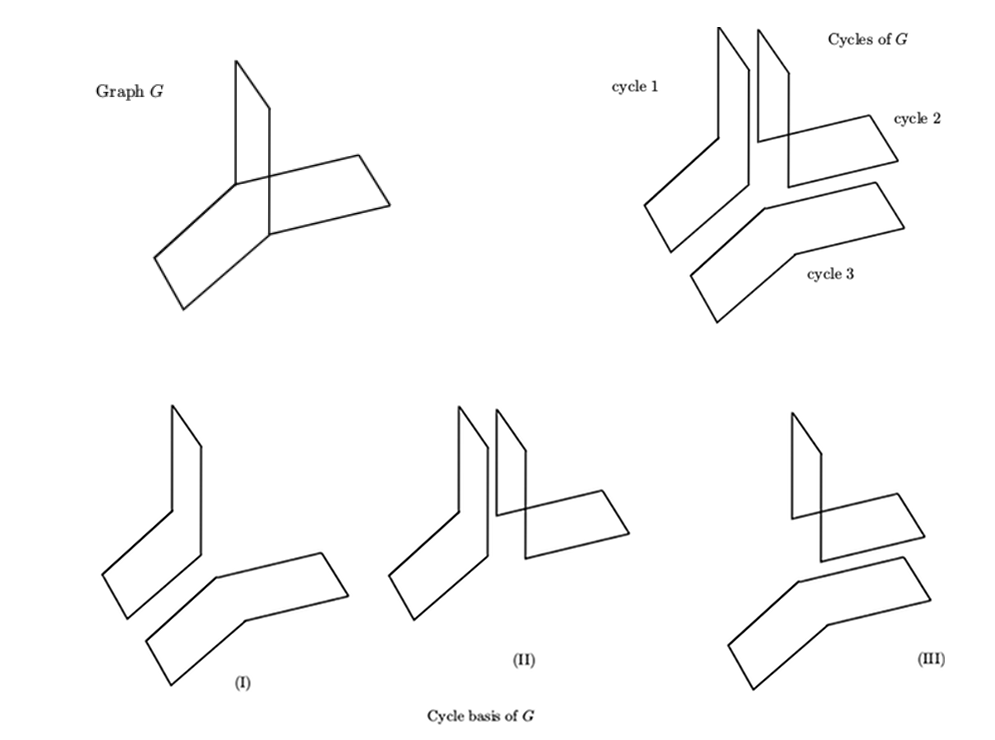}
\end{center}
\caption{Differents cycle basis}
\label{fig:basis}
\end{figure}

Vismara \cite{Vismara1997} reported that the union of minimum cycle basis of the graph is a canonical generator. In the same paper, it is said that the union of minimum cycle basis is the smallest canonical set of cycles which computes the cyclic structure of a graph and the number of cycles of the union of cycle basis can be exponential. Referring to the definition of a cycle basis, the union of minimum cycle basis in a graph is a generator. Although the algorithm proposed by Vismara computes a compact representation of the potentially exponential-sized set, there is no algorithm to list all the cycles of the union of minimum cycle basis. In the following section we will introduce an algorithm to compute a canonical generator of a molecular graph.

Given a molecular graph $G$ and a canonical generator $\zeta$ of $G$, our goal is to compute a graph $G^\zeta$ representing the cyclic part of $G$ and describing the interconnection between cycles of the molecular graph $G$.

\subsection{Graph of cycles for molecular graph}

Before defining formally the graph of cycles, we illustrate and explain it on one example.

\begin{example}
\normalfont

Let us consider the molecular graph of quinine, with $\{c_1, c_2, c_3,$ \ $c_4, c_5\}$ a canonical generator containing $5$ cycles (see Figure \ref{quinine}). These cycles are the vertices of the corresponding graph of cycle. In terms of similarity between molecules, when considering interaction between cycles in a molecular graph, its is important to distinguish cycles sharing some vertices (like cycles $c_1$ and $c_2$) and cycles linked by a path (like $c_2$ and $c_3$). It is why we consider two types of edges in the graph of cycle of a molecule. Firstly, the type $1$ is used for closed cycles \textit{i.e.} for cycles sharing at least one vertex in the molecular graph. Each edge of type $1$ has as label value the number of shared bonds. For instance, the plain blue edges on Figure \ref{quinine} are of type $1$. The edge between $c_1$ and $c_2$ is equals to $1$ because they have one bond in common. Secondly, the type $2$ is for cycles with a relationship than can be easily broken (more often the cycles are not closed in the molecular graph). Edges of type $2$ have as label value the length of a shortest path between the corresponding cycles in the molecular graph. For example, the dashed green edges on Figure \ref{quinine} are of type $2$.

\begin{figure}[H]
\includegraphics[scale=0.2]{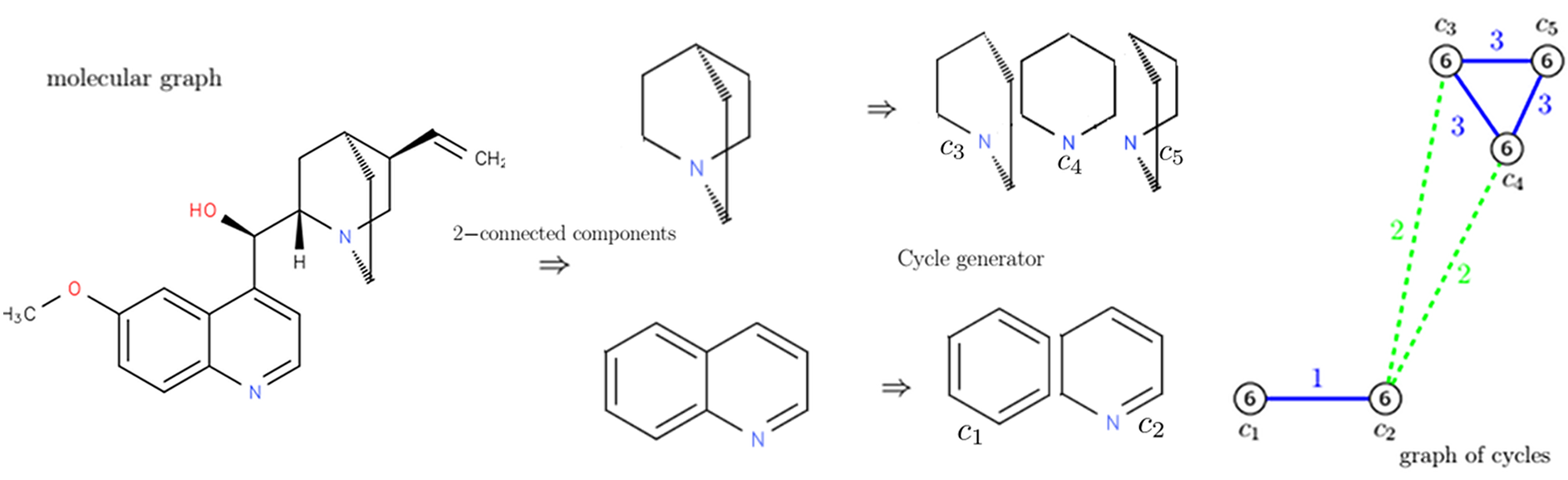}
\caption{Molecular graph, $2-$connected components and cycles of the generator.}
\label{quinine}
\end{figure}
\end{example}

In terms of similarity measurement, we will also have to upper bound the size of the considered cycles to be considered in the target molecular graph. Let us focus on two molecules considered as structurally similar : strychnine and vomicine. Indeed, as it is illustrated in Figure \ref{strychnine}, if we consider all the sizes of cycles in the vomicine, the two molecular graphs appear to be not similar. But, if we do not consider the cycles of size $9$ in the vomicine molecular graph, then the two obtained graphs of cycles are similar. In fact these cycles of size $9$ aren't cycles involved in the structure of the molecule but rather a connection between the structural part of the molecule and an azote atom. When in this case, reducing the graphs of cycles to cycles with size lower than or equal to $7$ is relevant, and it will be the case in most cases. It is why we introduce parameter $j$ in the next definition in order to allow or remove cycles for similarity.

\begin{figure}[H]
\includegraphics[scale=0.24]{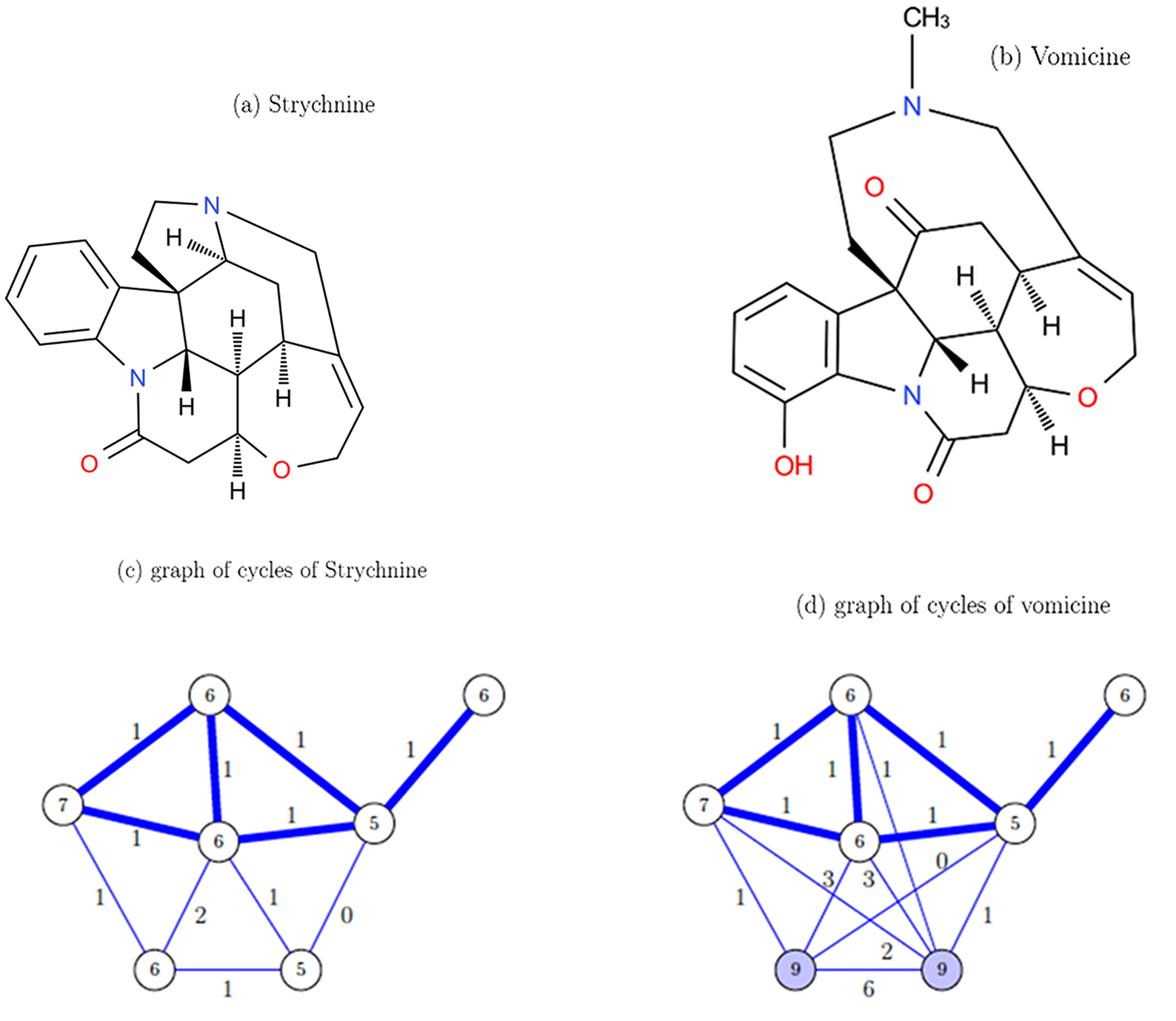}
\caption{Similar molecules :  Strychnine and Vomicine with their graph of cycles}
\label{strychnine}
\end{figure}

\begin{definition}
Let us consider a generator $\zeta$ and an integer $j$. The generator $\zeta$ is $j-$hierar\-chical if the subset of cycles of $\zeta$ with length equal or lower than $j$ can generate all the cycles of length lower than or equal to $j$ in $G$. 
\end{definition}

We denote by $\zeta_j$ the $j-$hierarchical set of $\zeta$. A generator $\zeta$ is hierarchical iff $\zeta_j$ is $j-$hierarchical for every $j$. 

\begin{lemme}
A minimum cycle basis of any graph is hierarchical.
\end{lemme}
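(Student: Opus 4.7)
The plan is to proceed by contradiction using an exchange argument: if some short cycle cannot be generated by the short basis elements, then the decomposition must involve a long basis element, which can be swapped out for the short cycle to produce a basis of strictly smaller weight.

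More concretely, let $B$ be a minimum cycle basis of $G$ and assume, for contradiction, that $B_j = \{b \in B : |b| \leq j\}$ does not generate every cycle of length at most $j$. Then I would fix a cycle $c$ with $|c| \leq j$ that is not generated by $B_j$. Since $B$ is itself a generator, I can write $c = b_{i_1} \oplus b_{i_2} \oplus \cdots \oplus b_{i_s}$ with each $b_{i_l} \in B$. Because $c$ cannot be obtained from $B_j$ alone, at least one of the $b_{i_l}$, call it $b$, must satisfy $|b| > j$.

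Next I would perform the exchange $B' = (B \setminus \{b\}) \cup \{c\}$. Rearranging the XOR identity gives $b = c \oplus \bigoplus_{l : b_{i_l} \neq b} b_{i_l}$, so $b$ lies in the XOR-span of $B'$; hence every cycle generated by $B$ is generated by $B'$, meaning $B'$ is still a generator. Since the cycle space of $G$ is a vector space over $\mathbb{F}_2$ whose bases all have the same cardinality, and $|B'| = |B|$, the set $B'$ is again a minimal generator, i.e., a cycle basis. But its weight is
\[
w(B') = w(B) - |b| + |c| \leq w(B) - |b| + j < w(B),
\]
contradicting the minimality of $B$. Therefore every cycle of length at most $j$ is generated by $B_j$, and since this holds for every $j$, $B$ is hierarchical.

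The only delicate point I anticipate is justifying that the swapped set $B'$ remains a cycle basis in the sense of the paper (a generator minimal with respect to inclusion). The cleanest way to handle this is to recall that over $\mathbb{F}_2$ the cycle space has a fixed dimension $m - n + p$, so any generator of that cardinality is automatically minimal; the rest of the argument is then a straightforward minimum-weight exchange, essentially the matroid exchange property applied to the cycle matroid.
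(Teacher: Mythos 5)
Your proposal is correct and follows essentially the same exchange argument as the paper: pick a short cycle $c$ not generated by $B_j$, find a basis element of length greater than $j$ in its decomposition, swap it for $c$, and derive a contradiction with minimum weight. Your added justification that $B'$ remains a basis (fixed dimension of the cycle space over $\mathbb{F}_2$) is a small improvement in rigor over the paper's version, but not a different route.
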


\begin{proof}
\normalfont
Let us consider a minimum cycle basis $B$. Assume that $B$ is not hierarchical \textit{i.e.} there is an integer $j$ such $B_j$ is not $j-$hierarchical. 

Since $B_j$ is not $j-$hierarchical, then there is a cycle $c$ of length lower than or equal to $j$ which cannot be generated with $B_j$. Therefore the cycle $c$ doesn't belongs to $B$. 

Since $B_j$ is a cycle basis, there is a set of cycles $\{c_1, c_2, ..., c_{\alpha}\}$ in $B$ with $ c = c_1 \oplus c_2 \oplus ... \oplus c_{\alpha -1 } \oplus c_{\alpha}$. Let us assume that $c_{\alpha}$ is a cycle of maximum length in the set \{$c_1, c_2, ..., c_{\alpha}$\}. Since $B_j$ doesn't generate $c$ then the size of $c_{\alpha}$ is greater than $j$.

The binary operator $\oplus$ is commutative so $c_1 \oplus c_2 \oplus ... \oplus c_{\alpha -1 }\oplus c = c_{\alpha}$.
We denote by $B'$ the set of cycles obtain by removing $c_{\alpha}$ and adding $c$ in $B$ (\textit{i.e.} $B' = B \backslash \{c_{\alpha}\}\cup \{c\} $). As $\{c_1, c_2, ..., c_{\alpha-1},c\} \subset B'$, $ c_{\alpha} = c_1 \oplus c_2 \oplus ... \oplus c_{\alpha -1 } \oplus c $ and $B$ a cycle basis, so is $B'$. The weight of the cycle basis $B'$ is $|B'| = |B| -|c_{\alpha}|+ | c| $. The weight of $B'$ is lower than the weight of $B$ (a contradiction because $B$ is a minimum cycle basis). Then $B$ is hierarchical.

\end{proof}

\begin{definition}
\label{graphcyclesdefinition}
Let $G$ be  a molecular graph, an integer $j$ and $\zeta_j$  be a $j-$hierar\-chical generator of cycles in $G$. The graph of cycles of $G$ induced by $\zeta_j$ is denoted $G^{\zeta_j} = (V^{\zeta_j},E^{\zeta_j},\mu,\nu,\theta)$ with the edge-set $E^{\zeta_j} = E^{\zeta_j}_{1} \cup E^{\zeta_j}_{2}$ .

 \begin{itemize}
 
 \item The vertex-set $V^{\zeta_j}$ is $\zeta_j$.
 \item The edge-set $E^{\zeta_j}$ define the relationship between cycles of  $V^{\zeta_j}$ according to to their proxi\-mity in $G$.
 \begin{itemize}
 \item $[c_1,c_2] \in E^{\zeta_j}_{1}$ iff $c_1$ and $c_2$ belong to the same $2-$connected components of $G$ and they have at least one common vertex.
 \item $[c_1,c_2] \in E^{\zeta_j}_{2}$ iff  $c_1$ and $c_2$ belong to different  $2-$connected components and there is a path $p$ from a vertex of $c_1$ to a vertex of $c_2$ in $G$ such that all edges of $p$ doesn't belongs to a cycle in $V^{\zeta_j}$.
 \end{itemize}
 \item For each vertex $c \in V^{\zeta_j}, \mu(c)$ is the weight of the cycle $c$;
 \item For each edge $e \in E^{\zeta_j}_k,$ $\nu(e) = k$;
 \item For each edge $e= [c_1,c_2]  \in E^{\zeta_j}$, $\theta(e)$ is the distance from $c_1$ to $c_2$ in $G$.  If $e \in E^{\zeta_j}_{1}$ then $\theta(e)$ is the number of common edges between $c_1$ and $c_2$ in $G$. Otherwise $\theta(e)$ is the length of the shortest path between a vertex of $c_1$ and a vertex of $c_2$ in $G$.
 \end{itemize}
\end{definition}

In the Example \ref{quinine}, we have $\mu(c_1) = 6$ as the length of the cycle $c_1$, $\nu([c_1,c_2]) = 1 ,$ $ \nu([c_2,c_3] = 2$ and $\theta([c_2,c_3]) = 2$ (the smallest path from a vertex of $c_2$ to a vertex of $c_3$ in the molecular graph).
In the following section, we describe how to compute a generator of cycles for a molecular graph and the relationship between its cycles.

\subsection{Cycles generator of a graph}
\label{edgecover}

A generator $\zeta$ of cycles as we define contains cycles such that each edge which belongs at least to a cycle is represented. It is computed by using a minimum cycle basis and adding additional cycles.

In this section, we present algorithms to compute a cycle generator $\zeta_j$. Let us consider a molecular graph $G=(V,E,w_V,w_E)$ that may be non connected. We called the structural graph of a molecular graph to be the maximum subgraph of $G$ without any vertex with a degree lower than $2$ in the subgraph. We delete the bridges in the structural graph and we denote $G_i$ with $ i \in [1..K]$ the $2-$connected components ($K$ is the number of components) computed with the bridgeless graph of the structural molecular graph.
\begin{algorithm}[H]
\label{algo1}
\SetAlgoLined
\KwData{A molecular graph $G$, an integer $j$ = maximum length of cycles .}

\KwResult{Generator $\zeta_j$ of a molecular graph $G$}
$\zeta_j= \emptyset $ set of cycles of the generator\;
$T = \emptyset $ set of cycles\;
Remove all the bridges and leaves in $G$ \;
Extract the $2-$connected components ( $G_1,G_2,...,G_K$ components)\;
Compute for each $G_i, i \in [1..K]$, a minimum cycle basis $B^i =\{c^i_1,c_2^i,...,c_k^i  \}$  \;
 \ForEach{Cycle basis $B^i$} 
 {
 	\ForEach{Couple of cycles $ c_a^i , c_b^i \in B^i$}
 	{
 	Let  a cycle $ c =  c_a^i \oplus c_b^i$\;
 	
 	\If{$c$ is an elementary cycle in $G_i$ and $c \notin B^i$ and $|c| = max(|c_a^i|,|c_b^i|)$}
 	{
 	Add $c$ to $T$\;
 	}
 	}
 	
 	$B^i = B^i \cup T$ \;
 	$T = \emptyset $ \;
 }

 $\zeta_j = \cup B^i$ \;
 Remove in $\zeta_j$ all the cycles with a length bigger than the parameter $j$\;

\caption{Generator of a graph.}

\end{algorithm}
\vspace{0.5cm}

We choose to compute a minimum cycle basis of a graph with the Horton algorithm \cite{Horton1987}. The algorithm is described here : 

\vspace{0.3cm}
\begin{algorithm}[H]
\label{algo2}
\SetAlgoLined
\KwData{A graph $G$}
\KwResult{A minimum cycle basis of $G$}
$B= \emptyset $\;
 Find shortest chains between all pairs  of vertices in each $G$\;
 \ForEach{ vertex $v$ and edge $[x,y]$ in each $G$}
	{Create the circuit $C(v,x,y) = P( v,x) + P(v,y) +[ x, y]$\;
	\If{$P( v,x) \bigcap P(v,y) = \{v\}$}
	{ Add $C(v,x,y)$ to $B$\;
	}
	}
 Order all the cycles of  $B$ by length\;
 Use a greedy algorithm (Gauss elimination) to find the minimum cycle basis $B$ from its set of cycles\;

\caption{Horton algorithm.}

\end{algorithm}
Horton algorithm is polynomial  $O(n*m^{3})$ \cite{Horton1987}. The complexity of the algorithm \ref{algo1} is lower than $O(n^2*m^{3})$:

\begin{itemize}
\item \textbf{ Step 2:} is polynomial $O(m^{2})$
\item \textbf{Step 4:} Horton algorithm is called $K$ times and each $G_i$ have at least $n$ vertex. An upper bound is $O(K*(n*m^{3}))$
\item \textbf{Step 5 to 12: } number of operations is : $K*k_i^2$, with $k_i$ the number of cycles in the computed minimum cycle basis of $G_i$.
\end{itemize}

Since similar molecules have similar structural parts, to compute the similarity between molecules we are going to do it on their graphs of cycles. In the next section we present a similarity calculation on graphs and we check if the results are coherent in terms of similarity of molecules.

\section{Similarity calculation and experimental results}

\subsection{Similarity calculation}
To measure similarity of two molecules on their corresponding graphs of cycles, we solve the Maximum Common Edge Subgraph (MCES) problem \cite{Raymond2002}. Considering two graphs $G=(V,E)$ and $G'=(V',E')$, a mapping of $G$ on $G'$ is a function  $\pi:V \rightarrow {\cal P}(V')$, such that for any $v \in V$, $\pi(v)\neq \emptyset$. We say that $G$ is \mbox{$\pi$-isomorphic} to $G'$ iff there exists an isomorphism $\cal I$ between $G$ and $G'$ such that, for any $v\in V$, we have ${\cal I}(v) \in \pi(v)$. Solving problem MCES  constrained by $\pi$ consists in finding the maximum subgraph of $G$ being $\pi$-isomorphic to a subgraph of $G'$. This problem has been shown to be NP-complete \cite{Raymond2002}.

Let us consider two molecules $M_1$ and $M_2$ and their corresponding graphs of cycles $G^{\zeta_j}_1=(V^{\zeta_j}_1,E^{\zeta_j}_1)$ and $G^{\zeta_{j'}}_2=(V^{\zeta_{j'}}_2,E^{\zeta_{j'}}_2)$. In our context, mapping $\pi$ is defined such that  for any $v \in V^{\zeta_j}_1 , \pi(v)= \{ v' | v' \in V^{\zeta_{j'}}_2$  \textit{and} $  | |v| -|v'|| \leq 0.2 *min (  |v|,|v'| )\}$. This function $\pi$ in graph of cycles allow two cycles to match if they have similar length. The value $0.2$ has been fixed experimentally. In our experiments, we compare the MCES calculations\cite{Raymond2002} on the molecular graphs and on the graph of cycles. For each computation we use a distinct isomorphism $\pi$. For two molecular graphs, the function $\pi$ maps maps atoms of the same type. Now consider two graph of cycles $G_{\zeta_1}$, $G_{\zeta_2}$ and  $G_{\zeta_{12}} = (V^{\zeta_{j''}}_{12},E^{\zeta_{j''}}_{12},\mu,\nu,\theta)$ a Maximum Common Edge Subgraph (MCES) of $G_{\zeta_1}$ and $G_{\zeta_2}$ constrained by $\pi$. We remind that the function $\mu$ indicates the length of each cycle; the function $\nu$  indicates the relation between each pair of  connected cycles (if they share vertices or not) and the function $\theta$ gives the label of edges between cycles (see Definition \ref{graphcyclesdefinition}). 

Considering two graphs $G_{\zeta_1}$ and $G_{\zeta_2}$,  the similarity  is the ratio between the sum of vertices and edges of the MCES $G_{\zeta_{12}}$ and the product of the sum of vertices and edges of $G_{\zeta_1}$ and the sum of vertices and edges of $G_{\zeta_2}$ :
 \begin{equation}
    sim(G_{\zeta_1},G_{\zeta_2}) = \frac{(|V^{\zeta_{j''}}_{12}| + |E^{\zeta_{j''}}_{12}| ) ^2 }{(|V^{\zeta_j}_1| + |E^{\zeta_j}_1| )\times (|V^{\zeta_j}_2| + |E^{\zeta_j}_2| )}
 \end{equation}
 Note that finding a maximum common edge subgraph between $G$ and $G'$ is similar to find the maximum clique in the product graph of the linegraph induced by $G$ and $G'$. In the next section, we experiment the similarity on graph of cycles.

 \subsection{Experimental results}
 
Considering some specific molecules in a database, each one to be compared with all the other ones, we evaluate the performances of measuring molecular similarity by using the method given in \cite{Raymond2002}, on the one hand on molecular graphs (MG) and on the other hand on graphs of cycles (GC). Considering $\pi$-isomorphism of molecular graphs, function $\pi$ concerns the type of atoms.

The target database of molecules is a freely available dictionary of small molecular entities called Chemical Entities of Biological Interest ChEBI. This database contains $ 90130$ molecules. According to their structural configuration, we choosed seven molecules in ChEBI. Considering the structural part of molecules from a chemical point of view, some of these molecules have many similar molecules in the database while others don't. For each one, we compute the similarities with all the molecules in the database. We then compare the distributions of the obtain similarities in the two contexts MG and CG, and we focus on the $20$ most similar pairs of molecules in each context. he two contexts MG and CG, and we focus on the $20$ most similar pairs of molecules in each context.  

To make sure that the two methods computed the similarity on the structural part, we removed all the leaves and bridges in all molecular graphs.
 
The computation has been done on a cluster Intel(R) Xeon CPU E$5$-$260$ v$3$ $@ 2.40$GHz with $64$G of RAM. To find a maximum clique in a graph to solve MCES, we did a linear program resolved by SCIP\footnote{http://scip.zib.de/} (Solving Constraint Integer Programs).  Because of the number of molecules in the database ($90 130$ molecules, knowing that many other database are larger) and since the similarity calculation between two graphs may have an exponential runtime due to the NP-completeness of the problem MCES, we chosed to fix an upper bound of similarity computation of similarity for each pair of molecules. This time depend on the size of the considered graphs (MG or GC). For example, if the maximum time for each similarity is $20$ seconds, then the whole computation requires $\pm20$ days on the cluster. As a consequence of the time limitation, some similarities are not computed for some pairs of molecules in the MG context. 

We then compared the distributions of the obtained similarities in the two contexts MG and CG, and we focused on the $20$ most similar pairs of molecules in each context. Our goal is to evaluate and compare the performances of the two approaches  MG and CG  from three points of view : the execution time required to calculate the measure of similarity for each pair of molecules, the capacity of each approach to distinguish the pairs of real similar molecules (\textit{i.e.}, the ones having similar core structures) and finally the capacity of discriminating real similar, meduim similar and not similar pairs of molecules. 

We are going to present the results the set of seven molecules : Docetaxel Anhydrous, Amphotericin B, Strychnine, Quinine, Cholesterol, Manzamine A and Brevetoxin A.
\subsubsection{Docetaxel Anhydrous}
 
Docetaxel Anhydrous has a generator of cycles with different lengths ($4$, $6$ and $8$). The graph of cycles has $6$ vertices and we can see that it maximum connected subgraph with edges of type $1$  is the kernel of this molecule (see Figure $6$). 
 
\begin{figure}[H]
 \label{docetaxel}
 \begin{center}
\includegraphics[width=5.3in]{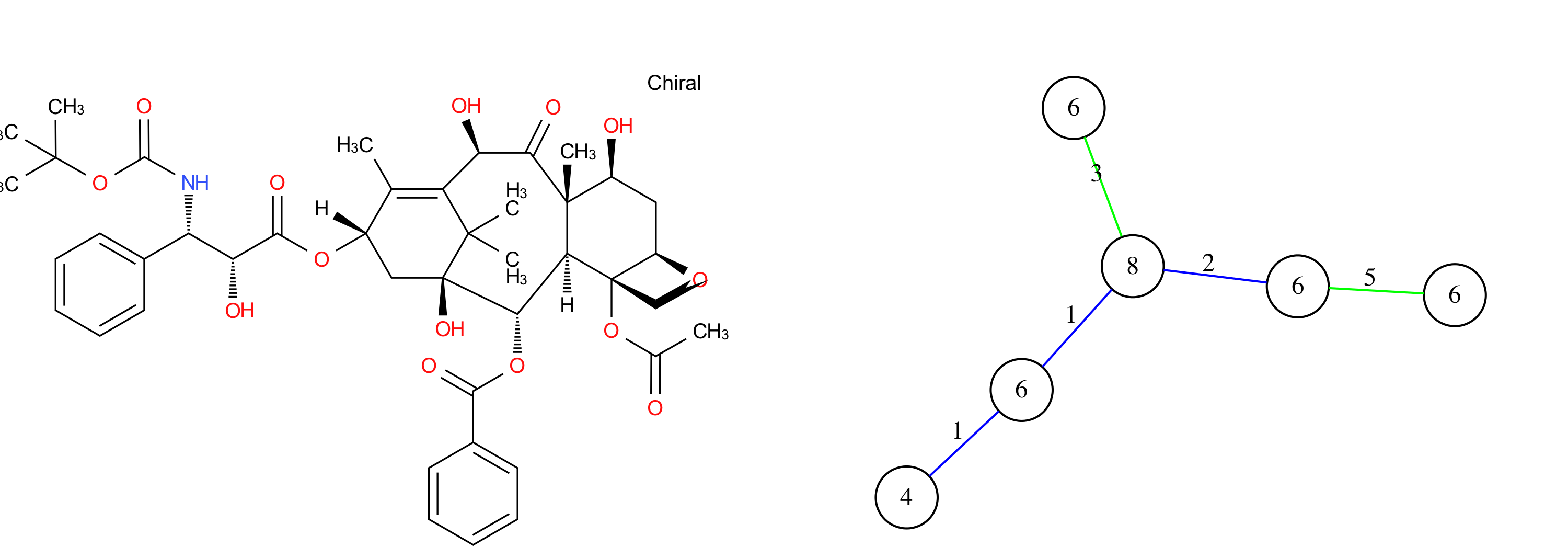}
\end{center}
\caption{Molecular graph and graph of cycles of Docetaxel anhydrous. }
 \end{figure}
 
  Here are the distributions of similarity on MG and GC :
 \begin{figure}[H]
 \begin{center}
\includegraphics[width=2.5in]{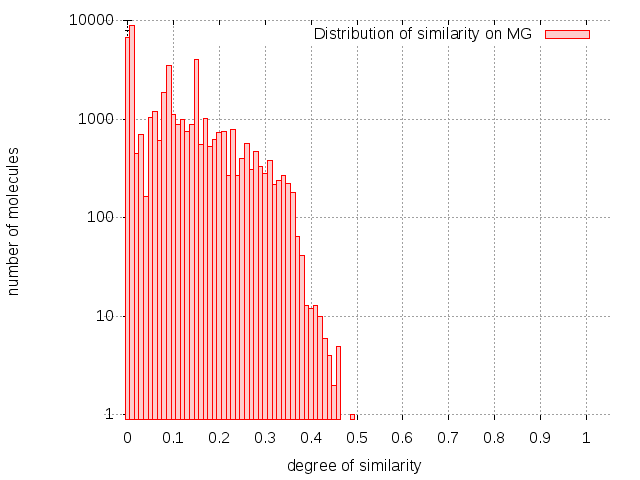}  
 \includegraphics[width=2.5in]{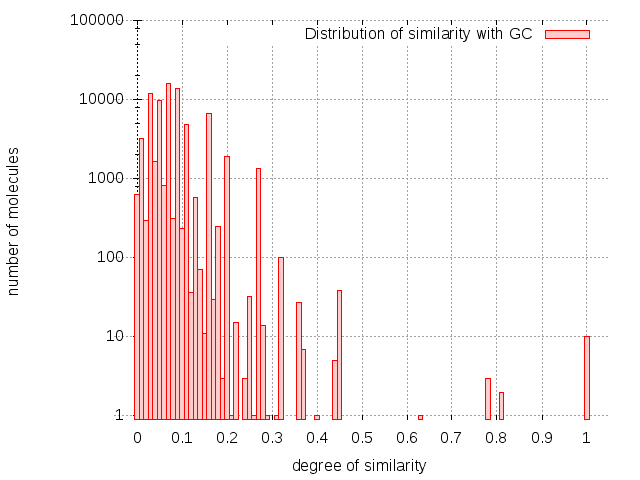} 
\end{center}
\caption{Distribution of similarity of Docetaxel anhydrous on molecular graphs (MG)  and on graphs of cycles (GC).}
\label{doce-dis}
 \end{figure}
 
 According to the distribution of similarity on GC, $4$ categories of similar molecules can be extracted :
 
 \begin{itemize}
 
 \item $9$ molecules are totally similar to Docetaxel (they are isomers). In fact, they have exactly the same graph of cycles. 
 \item $5$ molecules are partially similar; $2$ of them have a similarity degree equals to $0.81$ differ from Docetaxel only on $1-$connected part in MG. Their GC are subgraphs of the graph of cycles of Docetaxel, one cycle linked with an edge of type $2$ is missing.  The $3$ other molecules (with a degree of similarity of $0.78$) have the same structure as Docetaxel with more cycles. The GC of these molecules have GC of Docetaxel as subgraph of (they have one cycle more and two edges of type $2$).
 \item $1$ molecule is the kernel of Docetaxel. The degree of similarity is $0.63$).
 \item The rest of molecules with a degree lower than $0.45$ are not similar to the target molecule. 
\end{itemize}  
 
 In the distribution of similarity on MG, we fixed $30$ seconds to compute the similarity of two molecules. Over $46 846$ of $90 130$ molecules where not computed (about  $51.9 \%$). None of the molecules in top $20$ are chemically similar to to Docetaxel.

 \begin{figure}[H]
 \label{similardoce}
 \begin{center}
\includegraphics[width=5in]{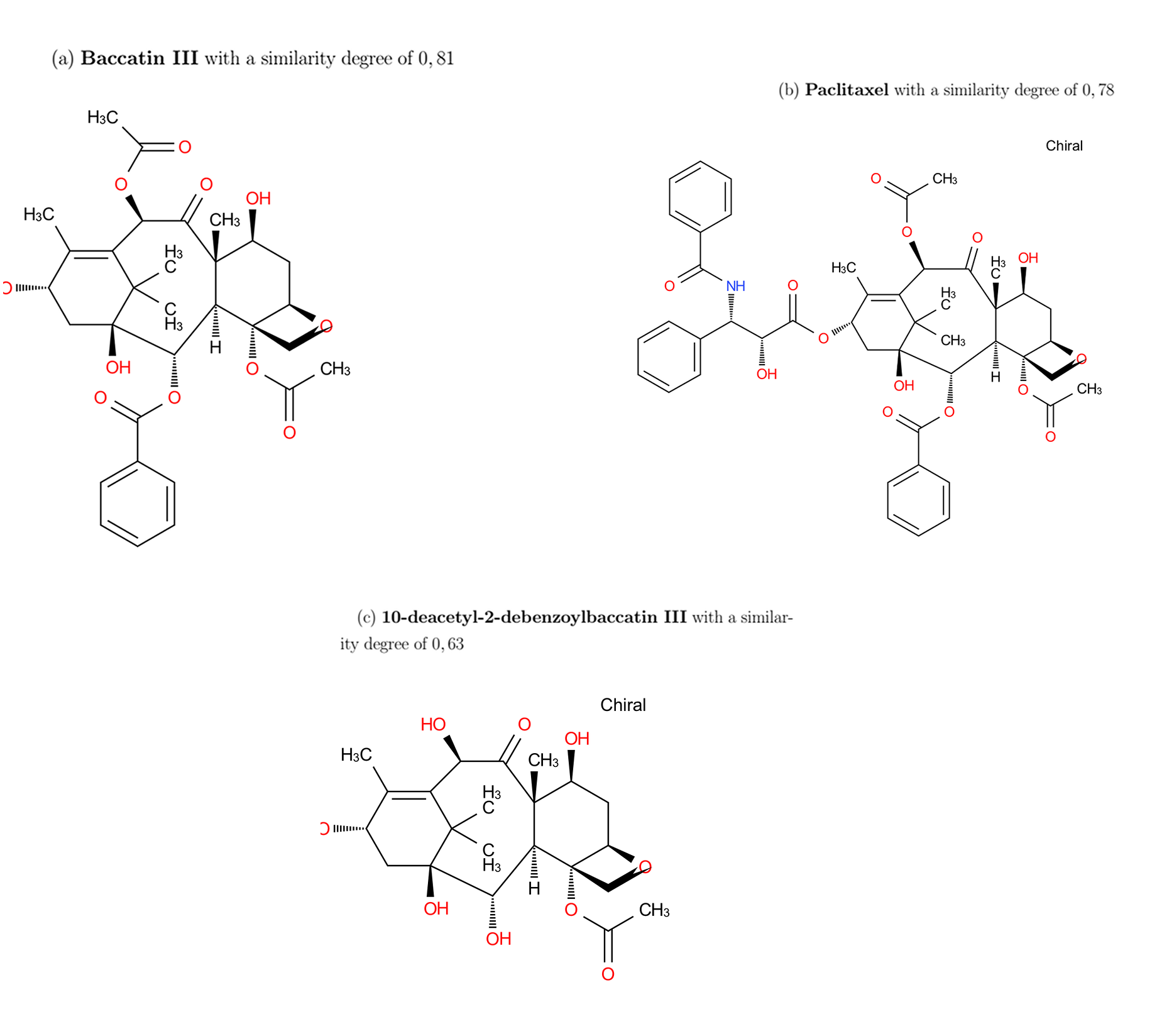}
\end{center}
\caption{ Three similar molecules to docetaxel with GC similarity}
 \end{figure}
 
\subsubsection{Amphotericin B}
 
 Amphotericin B has a particular cyclic structure so it mimimum cycle basis contains $3$ cycles with a particular cycle of length $36$. The corresponding graph of cycles thus contains $3$ vertices  (Figure \ref{AmpB-st}).
 
 \begin{figure}[H]
 \begin{center}
\includegraphics[width=5.3in]{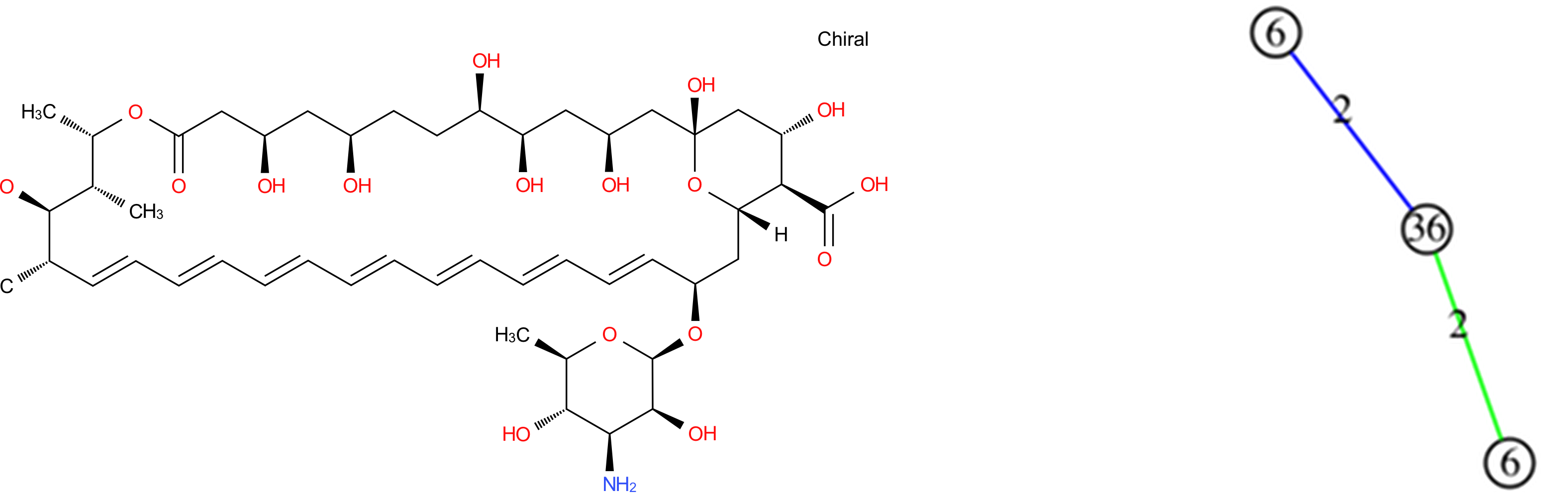}  
\hspace{2cm}
\end{center}
\caption{Amphotericin B molecular graph and it graph of cycles.}
\label{AmpB-st}
 \end{figure}

The GC distribution of similarities concerns all the molecules of the database. This distribution given in Figure \ref{AmpB-dis} shows $11$ molecules fully similar to the target one (degrees of similarity equal to $1$), and another distinguished set of molecules being partially similar to it (degrees of similarity equal to $0.7$ or to $0.6$). The other molecules can be considered as different from the target molecule (similarity lower than $0.5$). Thus, the calculation using cycle graphs clearly discriminates the molecules into three classes, which the molecular graph approach does not do. Moreover, MG approach does not succeed in calculating similarity degrees for several molecules classified as very similar by the GC approach ($50 932$ over $90 130$ molecules where not computed; that is $56.5 \%$). This is due to a too important running time needed; the computation is stop because of the upper bound ($20$ seconds). Indeed, the required computation time is far exceeding the imposed limit.  
\begin{figure}[H]
 \begin{center}
\includegraphics[width=2.5in]{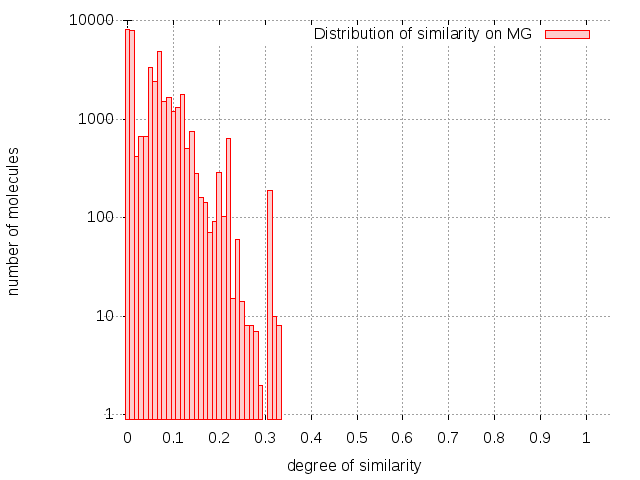}  
\hspace{0.5cm}
 \includegraphics[width=2.5in]{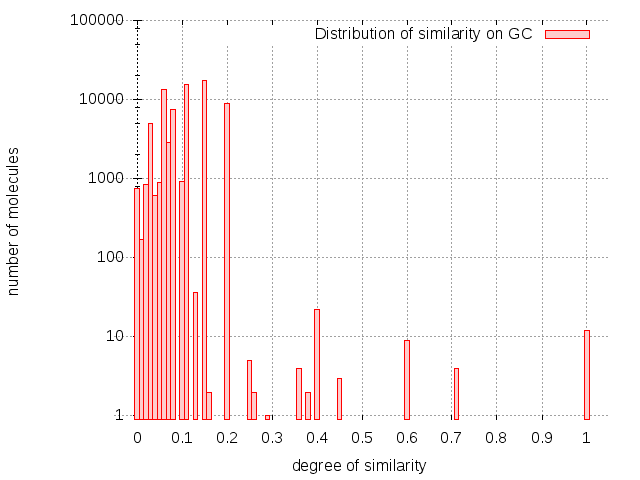} 
\end{center}
\caption{Distribution of similarity of Amphotericin B on molecular graphs (MG) and on graphs of cycles (GC).}
\label{AmpB-dis}
 \end{figure}
 Most of the totally similar molecules provided by the GC approach are either isomers of amphotericin B (amphotericin B methyl ester) or member of the same family (nystatin A1). Amphotericin belongs to the family of antifungal. The other fully similar molecules are not intuitively similar to amphotericin B considering their molecular graphs but the similarity in terms of cycle structure are chemically relevant (Figure \ref{similarampho}). The molecules with degree of similarity equal to $0.7$ in the GC distribution are the ones  such that their graph of cycles have the one of Amphotericin B as subgraph, and the molecules with degree of similarity $0.6$ are the ones which graph of cycles is the subgraph of the one of Amphotericin B. Note that these molecules are not discriminated in the MG approach.

 \begin{figure}[H]
 \begin{center}
\includegraphics[width=5.4in]{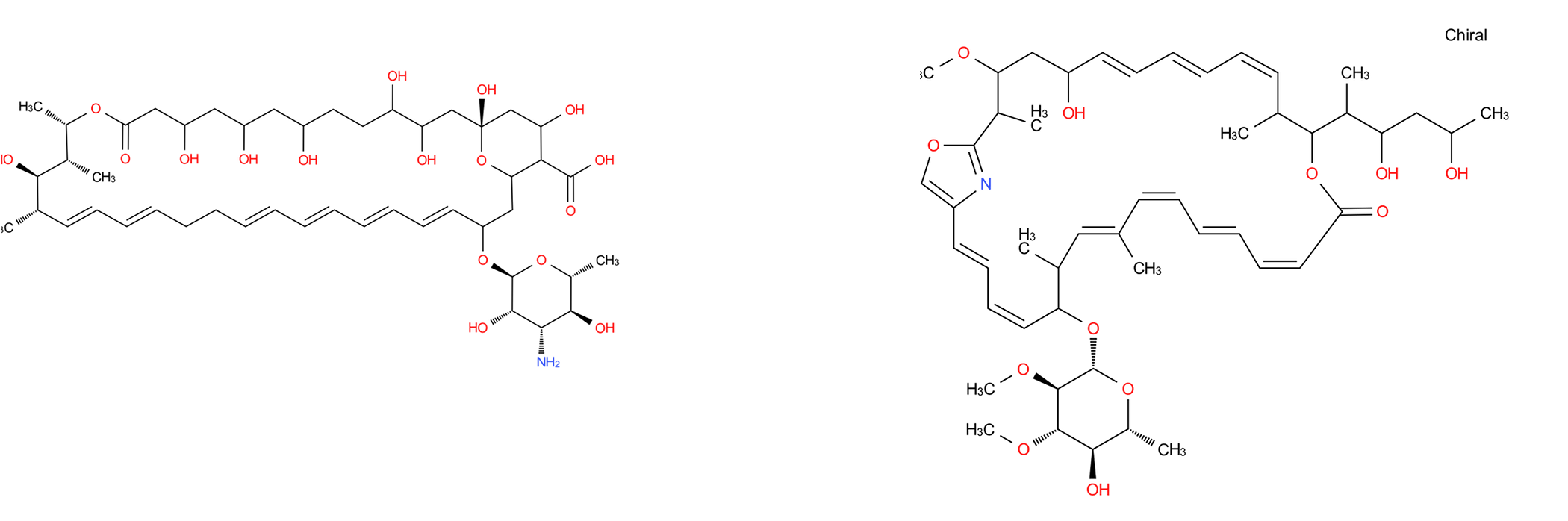}
\end{center}
\caption{nystatin A1 (ChEBI id $ 47 3992$) and Chivosazole A (ChEBI id $80 057$)}
 \label{similarampho}
 \end{figure}
 
\subsubsection{Strychnine}
 
The molecular graph of strychnine is a $2-$connected component with cycles of different length. 
\begin{figure}[H]
 \label{strychninemol}
 \begin{center}
\includegraphics[width=5in]{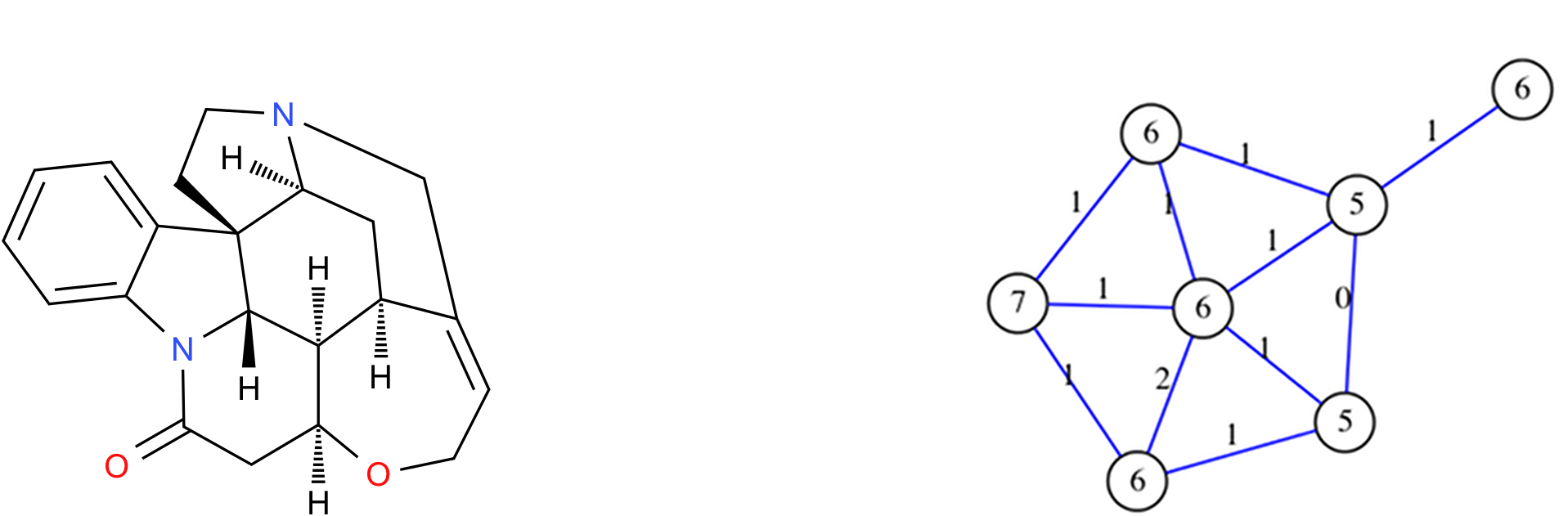}
\end{center}
\caption{Molecular graph and graph of cycles of strychnine. }
 \end{figure}
 
 Over $31 174$ on $90 130$ molecules where not computed for MG ($34.6 \%$). In MG, the six first molecules (with a degree of similarity equals to $0.8$) are the same than the first on GC  (with a degree of similarity equals to $1.0$). All the top $20$ molecules are the same in both methods except when $j=7$, a new molecule appear in GC at position $15$.
 \begin{figure}[H]
 \begin{center}
\includegraphics[width=2.6in]{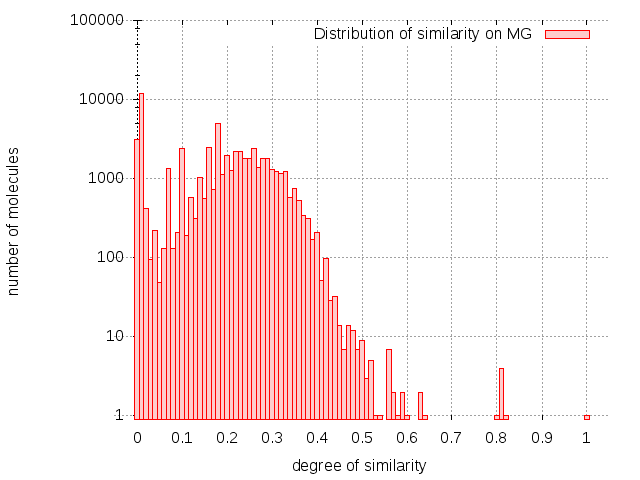}  
\includegraphics[width=2.6in]{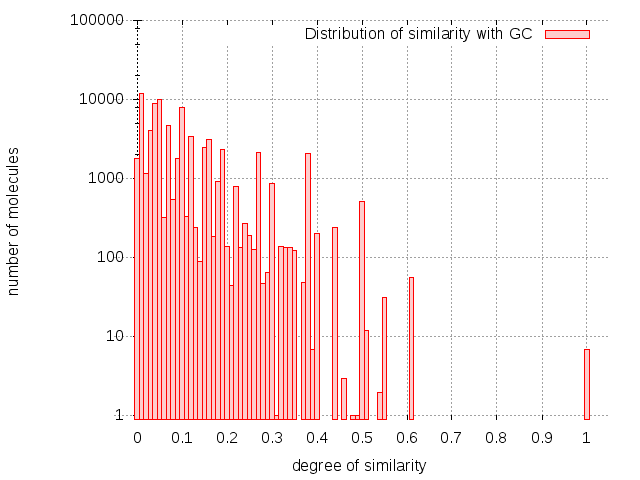} 
\end{center}
\caption{Distribution of similarity of strychnine on molecular graphs (MG) and on graphs of cycles (GC) with $j =9$.}
\label{strychnine-dis}
 \end{figure}

However in GC, the molecule vomicine appears to be similar to strychnine with a degree of similarity equals to $0,32$ (ranking $3720$ over $90 130$) when the parameter $j \geq 9$. Using the same graph of cycles with $j = 7$ (as explain in Figure \ref{strychnine}), the same molecule has a similarity value of $0.68$ with strychnine and a ranking $15 $ over $90 130$. Chemically, these two molecules are similar so it appears important to choose a good value of $j$. 

Here are some molecules similar to strychnine with both GC and MG :

\begin{figure}[H]
 \label{similarstrychnine}
 \begin{center}
\includegraphics[width=2.5in]{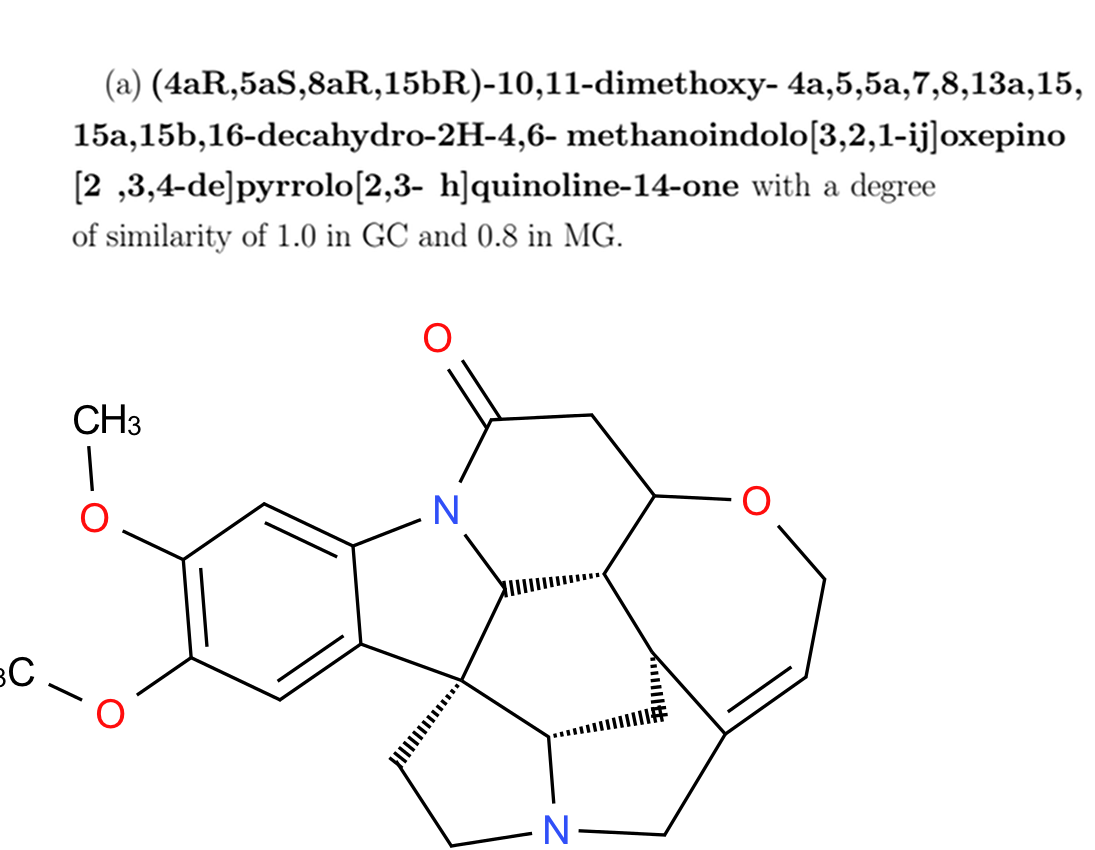}  
\includegraphics[width=2.5in]{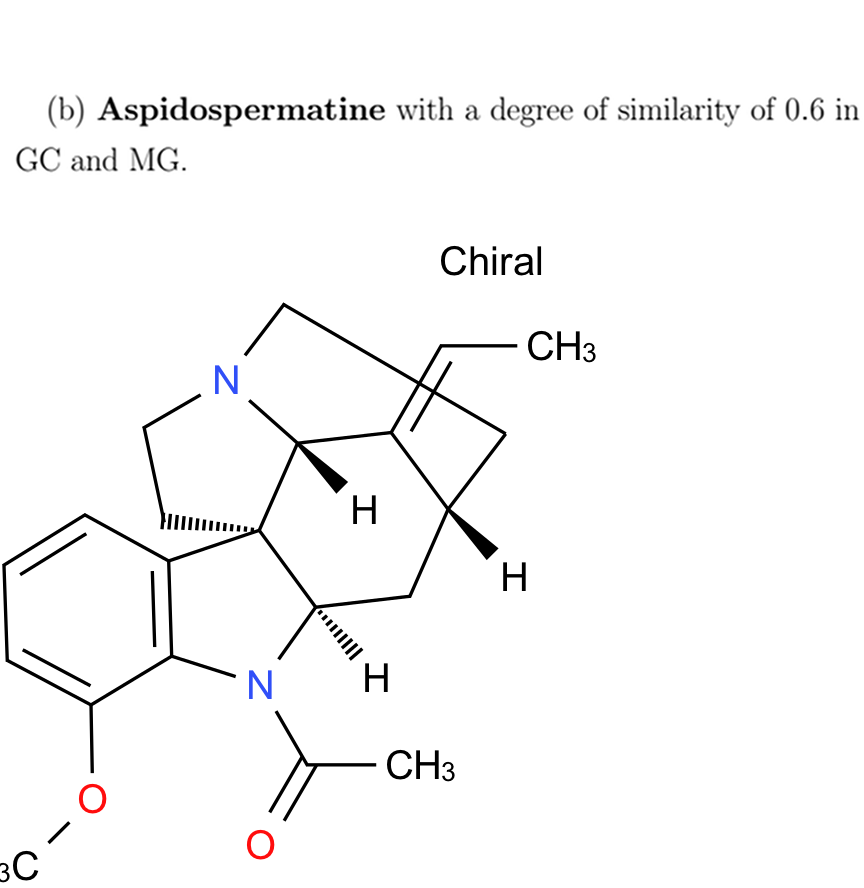} 
\end{center}
\caption{Results of similarity for strychnine with GC and MG }
 \end{figure}

\subsubsection{Quinine}
 
Quinine is a small molecule (with 25 atoms) with a generator of cycles consisted of cycles of length $6$.
 
\begin{figure}[H]
 \label{quininemol}
 \begin{center}
\includegraphics[width=5.4in]{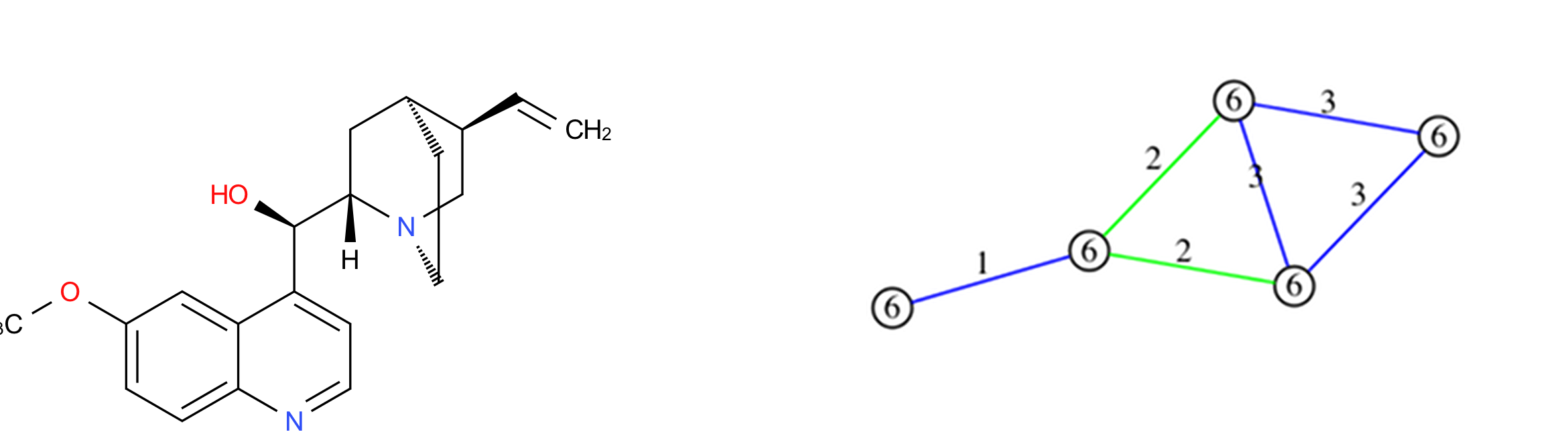}
\end{center}
\caption{Molecular graph and graph of cycles of Quinine. }
 \end{figure}
 
 In GC, there is $24$ molecules with a similarity of degree $1.0$. All of them are members of the same family. The results of similarity doesn't differ isomers (similarity degree of $1.0$) of quinine from others while MG is more precise about that. Molecules with a degree of $1.0$ in MG have exactly the same structural part (same atoms and type of bonds).
 \begin{figure}[H]
 \begin{center}
\includegraphics[width=2.6in]{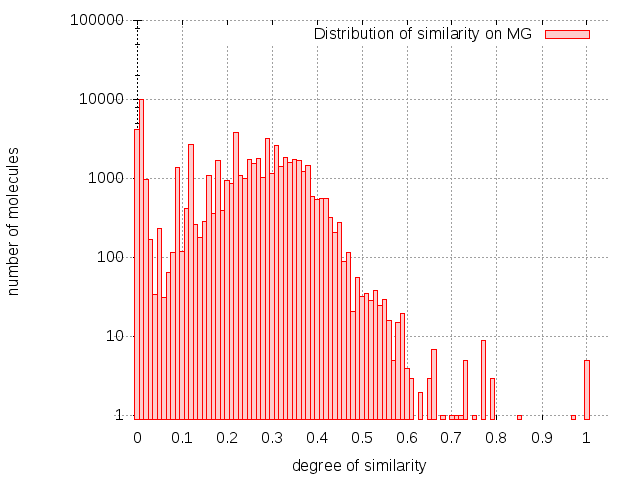}  
 \includegraphics[width=2.6in]{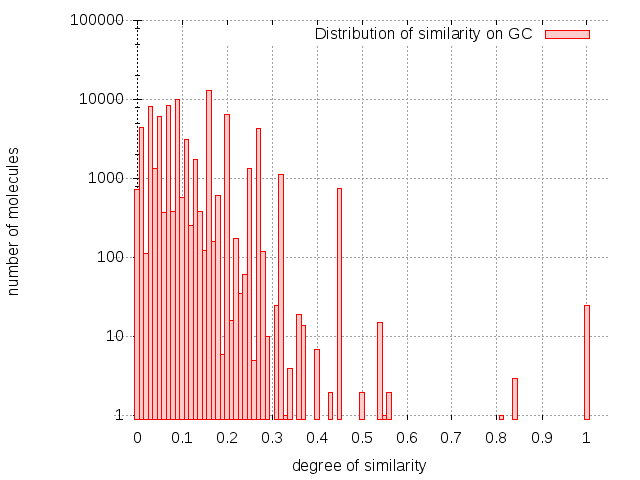} 
\end{center}
\caption{Distribution of similarity of quinine on molecular graphs (MG)  and on graphs of cycles (GC).}
\label{quinine-dis}
 \end{figure}
 
 Molecule $(a)$ in Figure \ref{quinine-dis} is in position $1$ both methods while molecule $(b)$ is n\textsuperscript{o} $1$ in GC with a degree of similarity of $1.0$ but at n\textsuperscript{o} $2197$ with a degree of $0.41$ in MG because of the type of the bonds. Optochin is an isomer of quinine.
 
\begin{figure}[H]
 \label{similarquinine}
 \begin{center}
\includegraphics[width=2.5in]{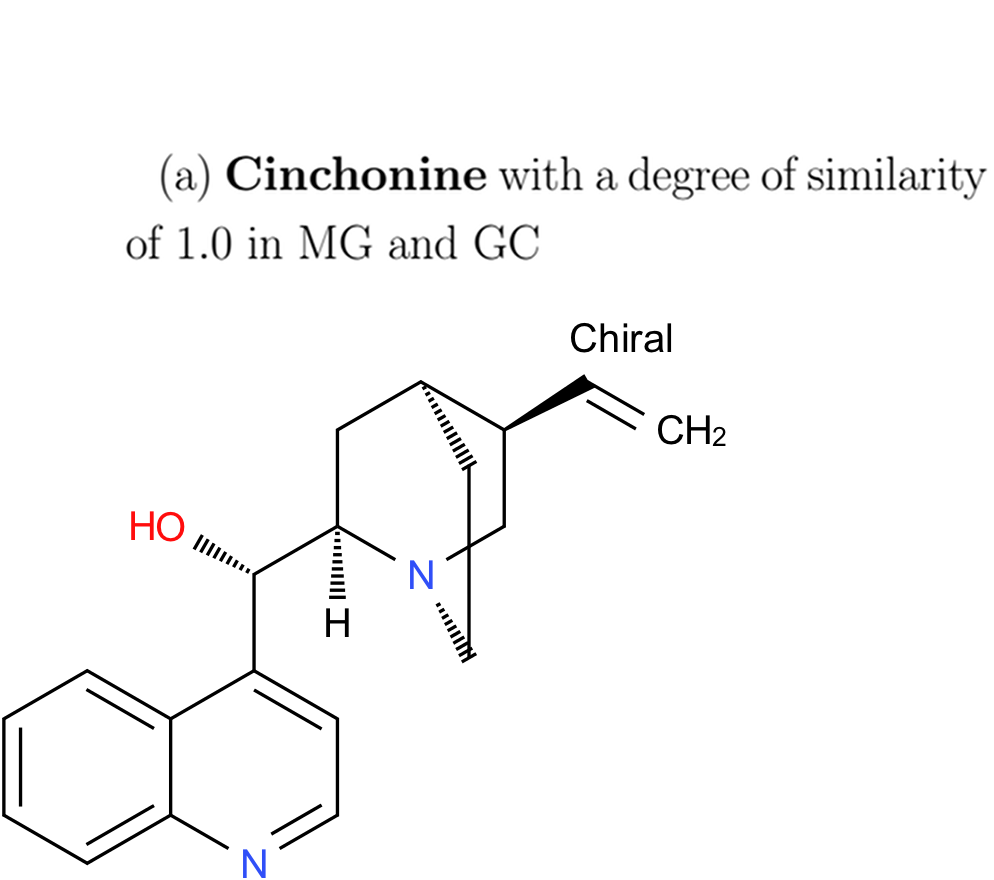}  
\includegraphics[width=2.5in]{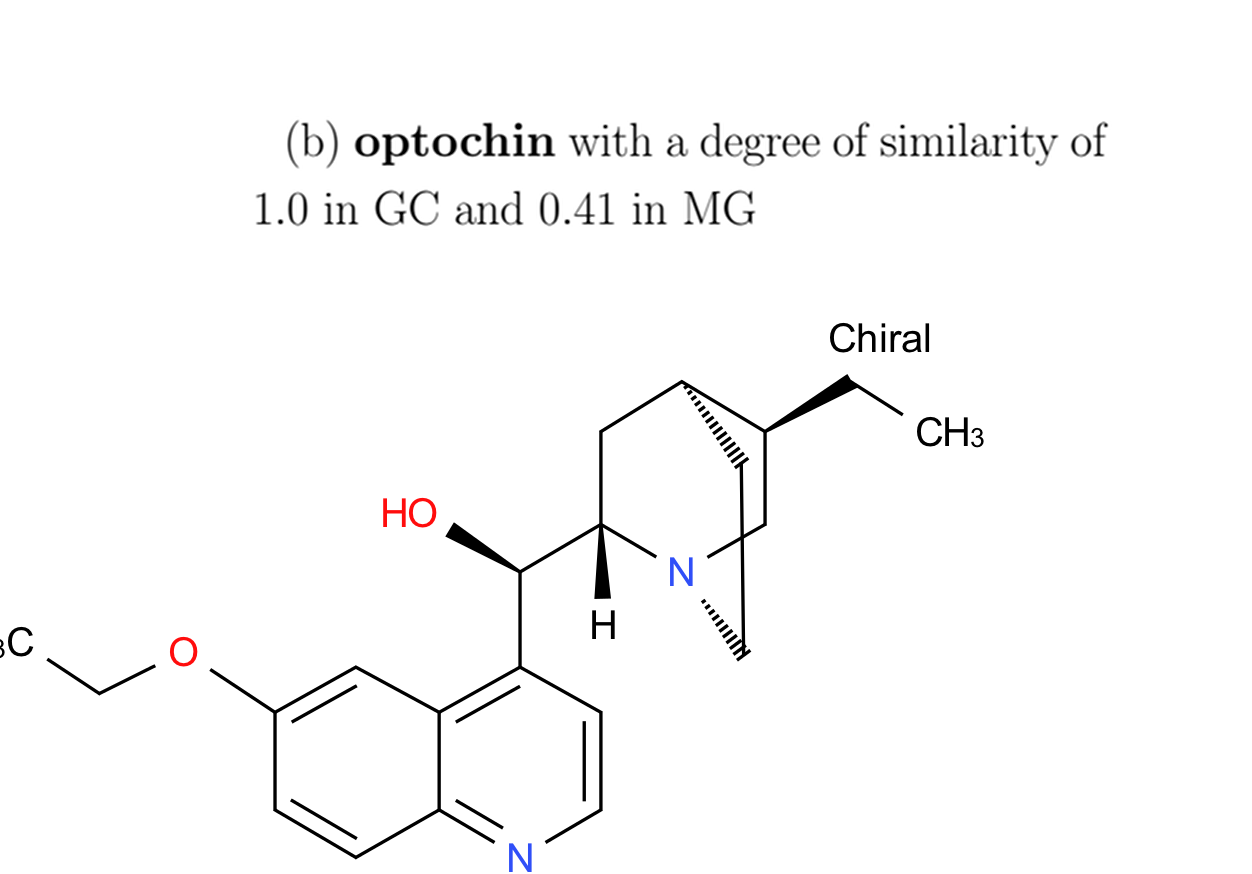} 
\end{center}
\caption{Results of similarity for quinine with GC and MG }
 \end{figure}
Over $29 784$ of $90 130$ molecules where not computed for MG ($33 \%$). Nevertheless, three molecules with a degree of $0.78$ in MG are not similar to quinine and those with a lower degree $0.77$ are similar to quinine. For example the molecular graph Sarpagine (Figure \ref{similarquinine_2}) has as subgraph the molecular graph of quinine. But this subgraph break a cycle (structure) of the molecule.
These results show that finding a maximum common edge subgraph on molecular graph doesn't consider the structural part of the molecule.

\begin{figure}[H]
 \begin{center}
\includegraphics[width=2.6in]{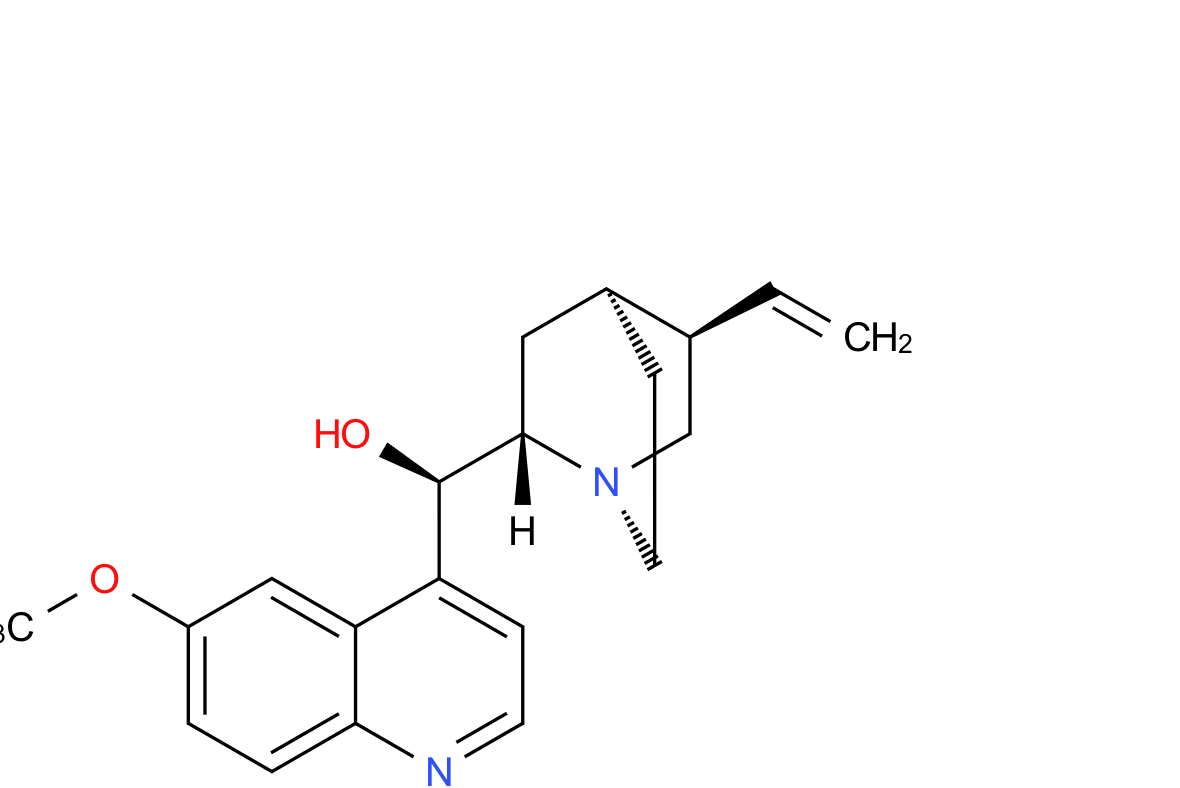}  
\includegraphics[width=2.6in]{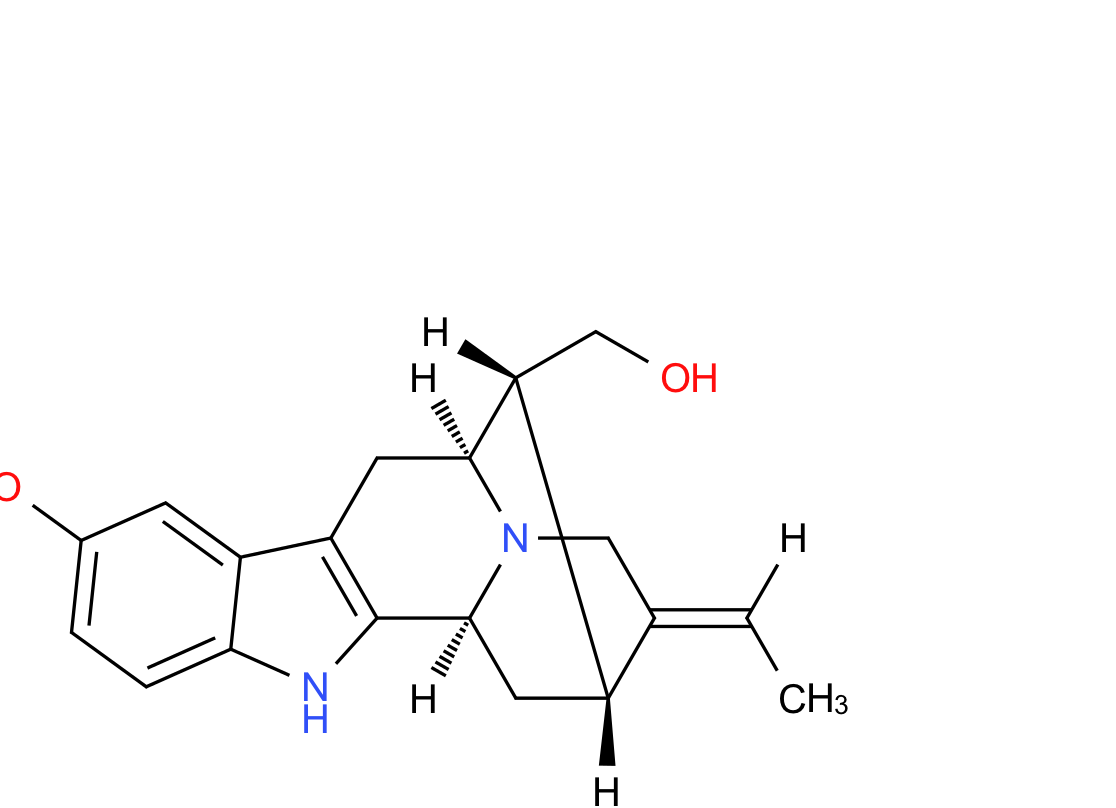} 
\end{center}
\caption{Quinine and Sarpagine similar with $0.78$ in MG }
 \label{similarquinine_2}
 \end{figure}
 With GC, molecules having the same structural part as Sarpagine are ranked with $0.3$ of similarity and they doesn't appear similar to quinine. In fact these molecules are not chemically similar to quinine.
 
\subsubsection{Cholesterol}
 
\begin{figure}[H]
 \label{cholesterol}
 \begin{center}
\includegraphics[width=5.4in]{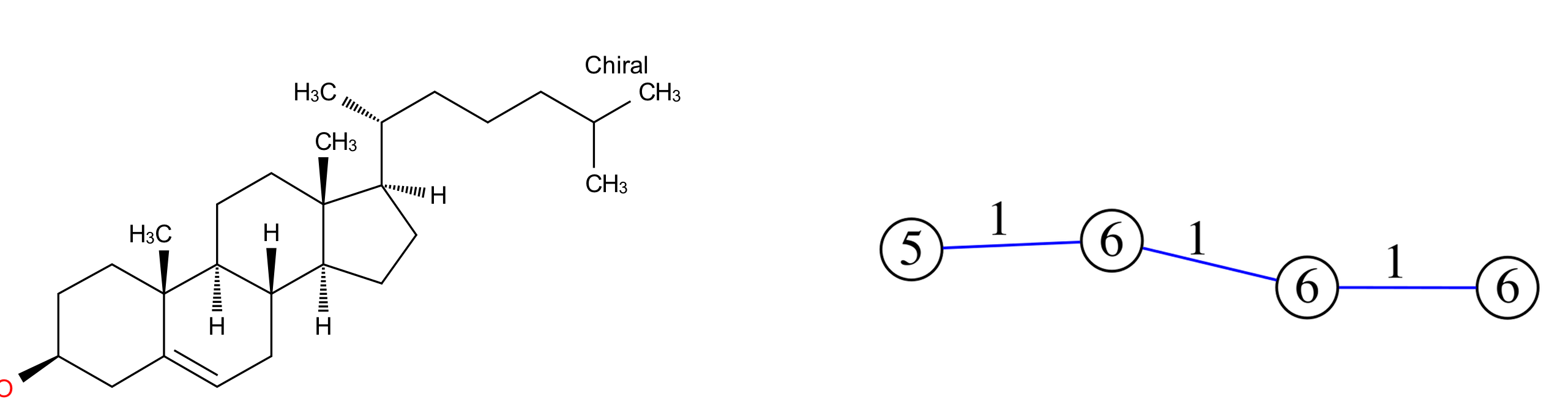}
\end{center}
\caption{Molecular graph and graph of cycles of Cholesterol. }
 \end{figure}
 
The molecule cholesterol is a small molecule having many similar molecules in ChEBI, both methods returns similar results. However the results on MG is more precise in term of type of molecular bonds between atoms.  Over  $31 178$ of $90 130$ molecules where not computed for MG ($41.2 \%$).

 \begin{figure}[H]
 \begin{center}
\includegraphics[width=2.6in]{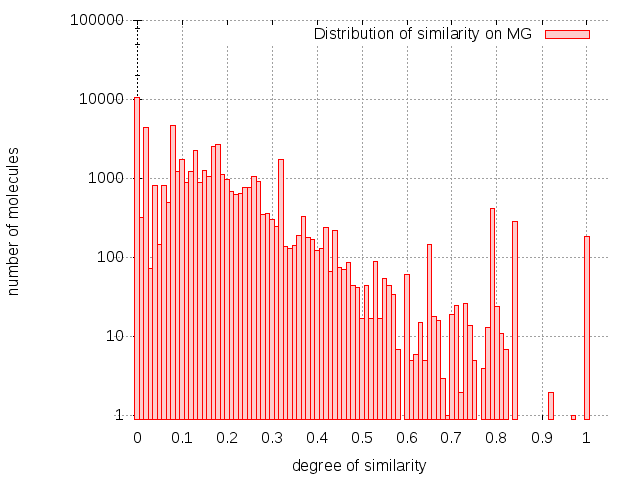}  
 \includegraphics[width=2.6in]{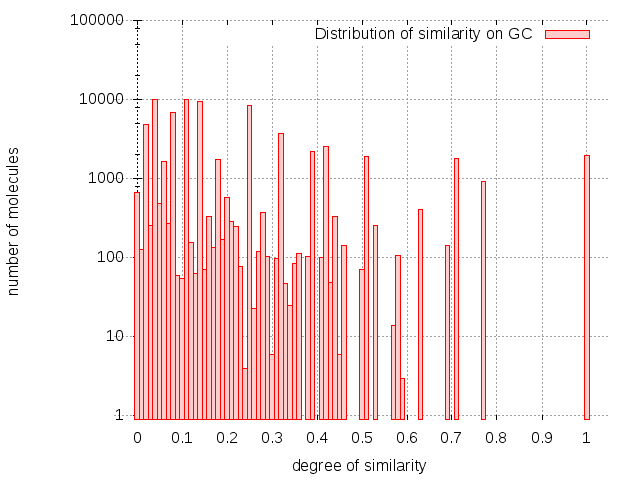} 
\end{center}
\caption{Distribution of similarity of Cholesterol on molecular graphs (MG) and on graphs of cycles (GC).}

\label{cholesterol-dis}
 \end{figure}

All the molecules in MG with a similarity degree in range $[0.8 , 1.0[ $ have a degree of $1.0$ in GC. But some molecules with a degree of $1.0$ in GC are not similar to cholesterol because we accept the fact that cycles of length $6$ are similar to those of length $5$. This parameter may be adjusted on small molecules (number of atoms). 

 \begin{figure}[H]
 \label{similarchole}
 \begin{center}
\includegraphics[width=2.5in]{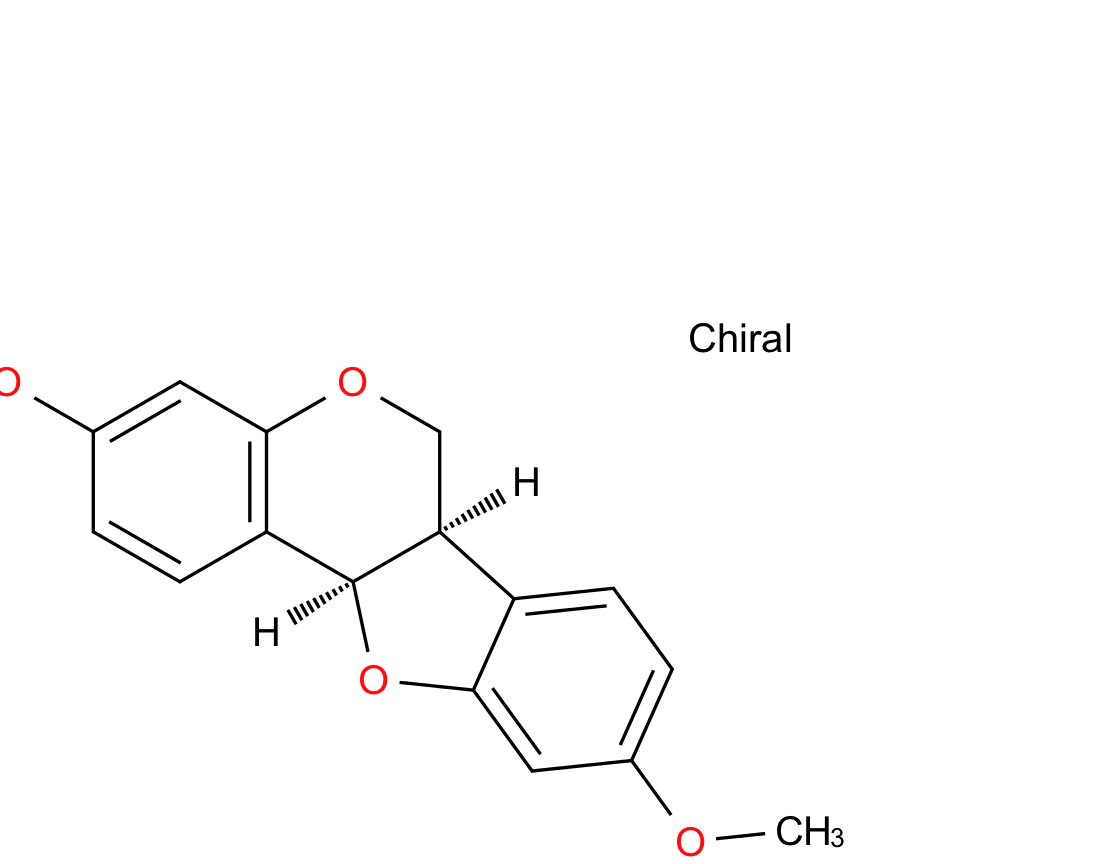}  
\includegraphics[width=2.5in]{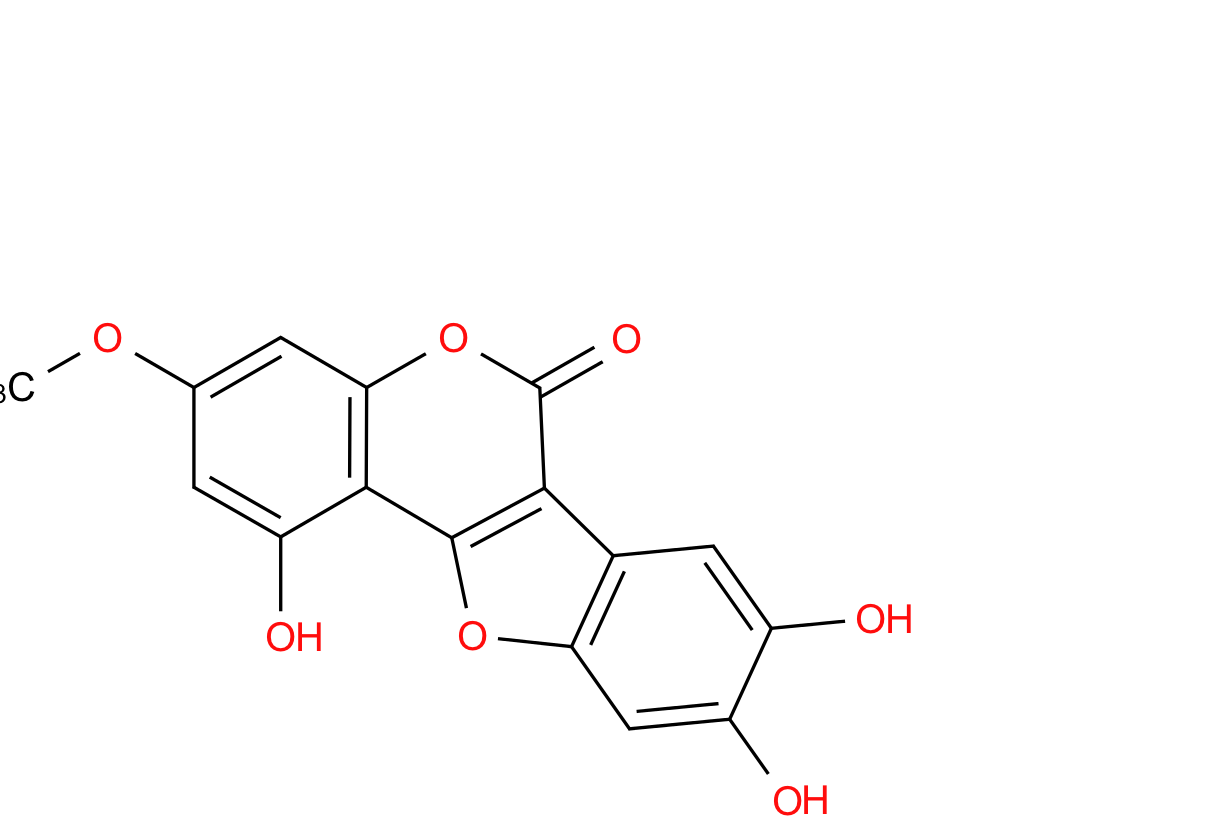} 
\end{center}
\caption{wedelolactone and $(-)-$medicarpin }
 \end{figure}
 
 In GC, wedelolactone and $(-)-$medicarpin are similar with $1.0$ with variation of cycles accepted . If we change this parameter such that cycle of different lengths cannot match, this two molecules are no more similar to cholesterol (similarity degree of $0.3$).

\subsubsection{Manzamine A}
 
 Manzamine A doesn't have many similar molecules in database CHEBI apart from molecules of the same family. This molecule has a particular structure with a cycle of length $13$ connected to small cycles.

\begin{figure}[H]
 \label{manzamine}
 \begin{center}
\includegraphics[width=5.4in]{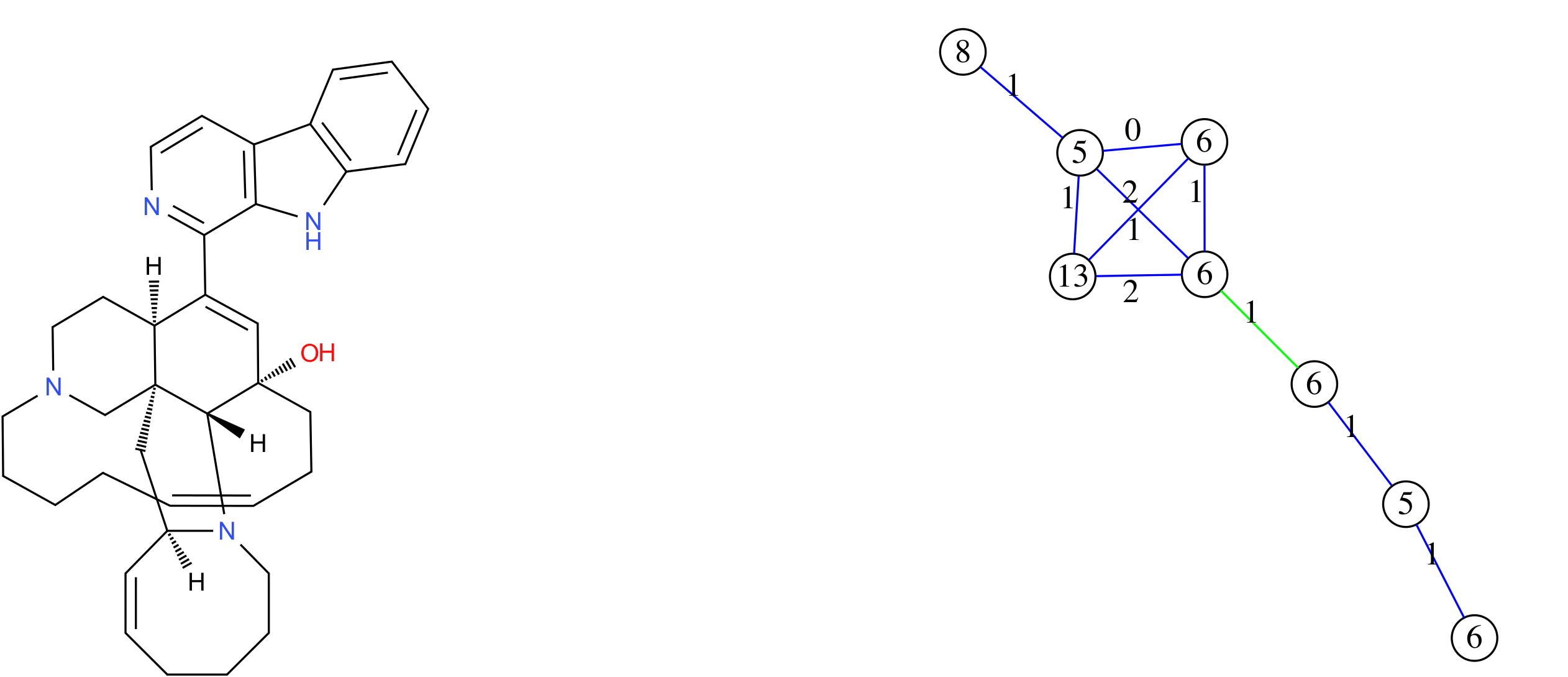}
\end{center}
\caption{Molecular graph and graph of cycles of Manzamine A. }
 \end{figure}
 
 GC similarity was able to catch the molecules of the same family with a degree of similarity of $1.0$ while MG wasn't able to find one of them. The computation time was fixed to $30$ secondes for this molecule. Over $31771$ of $90130$ molecules where not computed for MG ($35.2 \%$).
 \begin{figure}[H]
 \begin{center}
\includegraphics[width=2.6in]{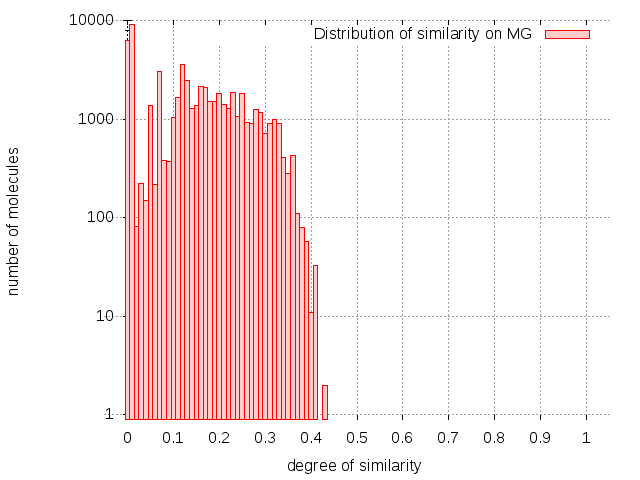}  
 \includegraphics[width=2.6in]{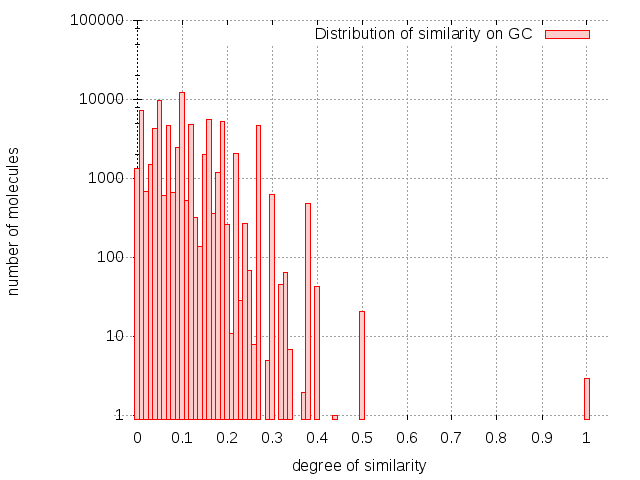} 
\end{center}
\caption{Distribution of similarity of Manzamine A on molecular graphs (MG)  and on graphs of cycles (GC).}
\label{manzamine-dis}
 \end{figure}
 
In GC, molecules similar to manzamine A with a degree of $0.5$ have a small similarity with the structural part of Manzamine A but are not really relevant. The method MG doesn't give any satistying results; the first molecules are not similar to Manzamine A.
 
 \begin{figure}[H]
 \label{similarmanza}
 \begin{center}
\includegraphics[width=2.4in]{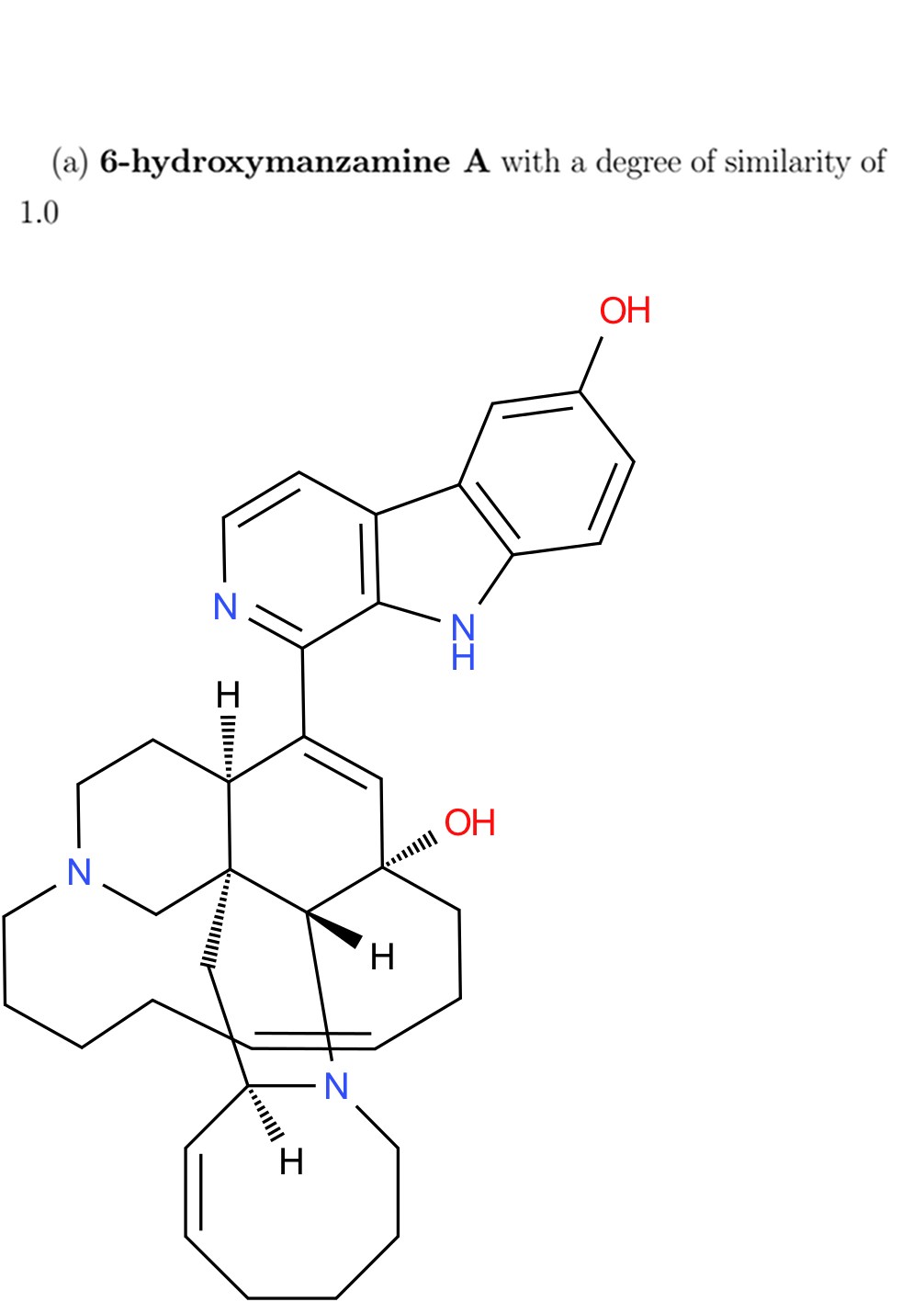}  
\includegraphics[width=2.4in]{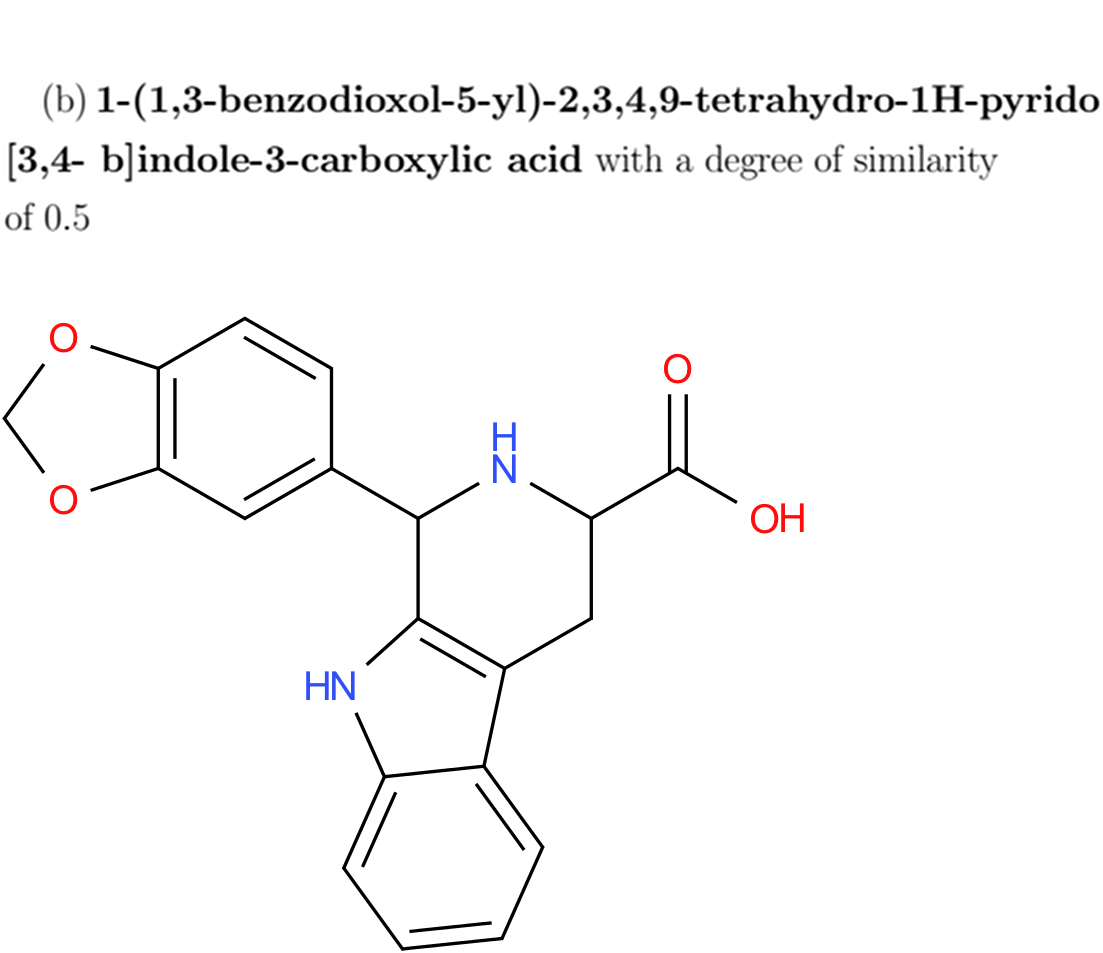} 
\end{center}
\caption{Results of similarity for Manzamine  A on GC }
 \end{figure}

\subsubsection{Brevetoxine A}
 
The structural part of Brevetoxine A is a chain a cycles. It particularity is the length of it cycles ($5$, $6$ ,$7$, $8$ and $9$) with two cycles sharing $0$ or $1$ common edge in the molecular graph.
 
\begin{figure}[H]
 \label{brevetoxine}
 \begin{center}
\includegraphics[width=5.4in]{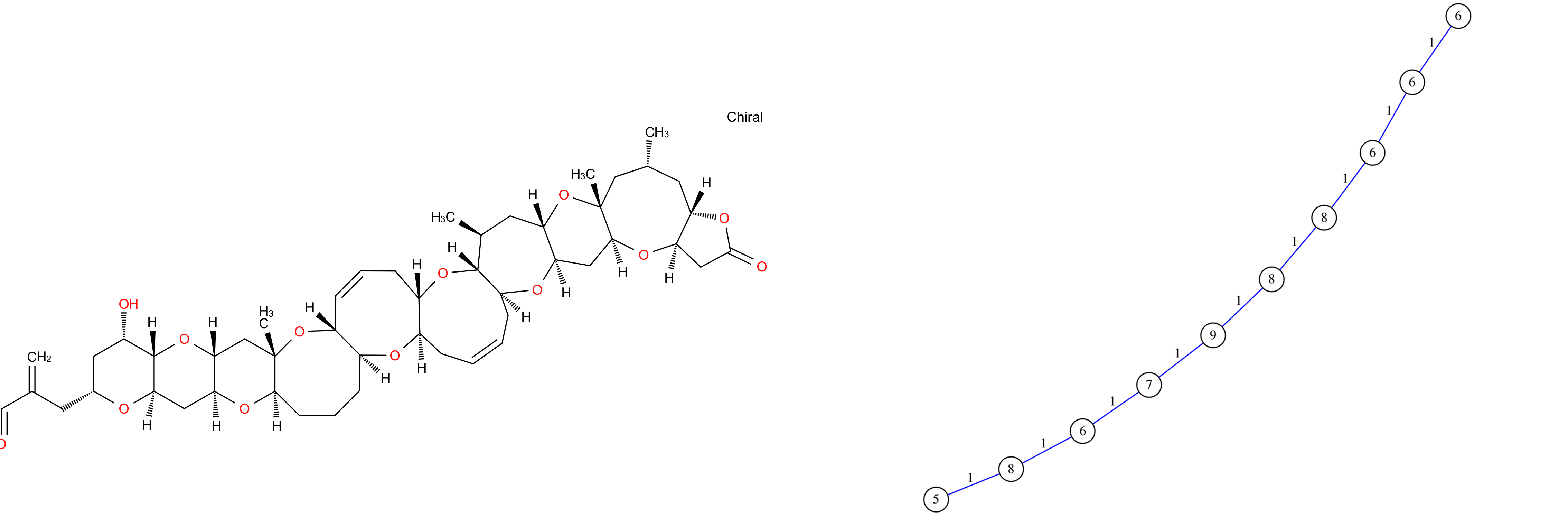}
\end{center}
\caption{Molecular graph and graph of cycles of Brevetoxine A. }
 \end{figure}
 
 In GC, we have $3$ categories :
 \begin{itemize}
 
 \item $5 $ molecules are similar with a degree upper than $0.64$. They are member of the same family with Brevetoxin A.
 \item $2$ molecules are similar with a degree equal to $0.47$ are partially similar. Their GCs are subgraphs of the GC of Brevetoxine A.
 \item The rest of molecules with a degree of similarity lower than $0.4$ are not similar to Brevetoxin A.
 \end{itemize}
 \begin{figure}[H]
 \begin{center}
\includegraphics[width=2.6in]{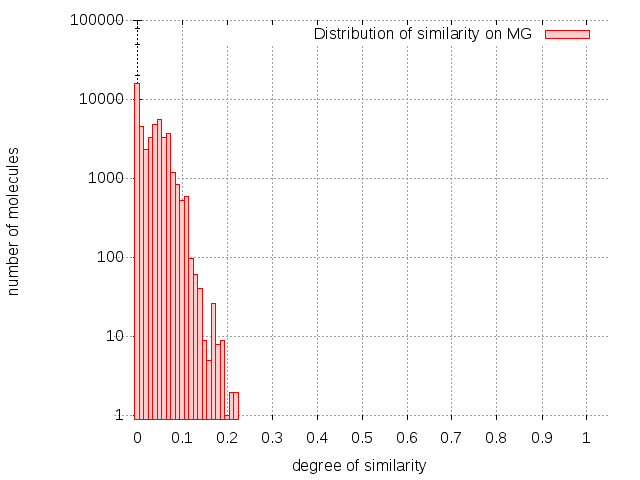}  
 \includegraphics[width=2.6in]{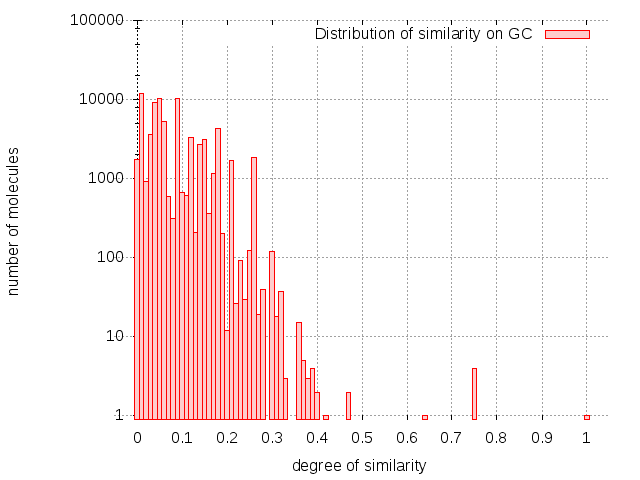} 
\end{center}
\caption{Distribution of similarity of Brevetoxine A on molecular graphs (MG)  and on graphs of cycles (GC).}
\label{brevetoxine-dis}
 \end{figure}

For MG, the parameter of time was fixed to $40$ seconds.  Over $43 237$ of $90 130$ molecules where not computed for MG ($47.9 \%$). The first molecule on top $20$ is not similar to Brevetoxine A and has a degree of similarity equals to $0.2$. 

\begin{figure}[H]
 \label{similarbreve}
 \begin{center}
\includegraphics[width=2.6in]{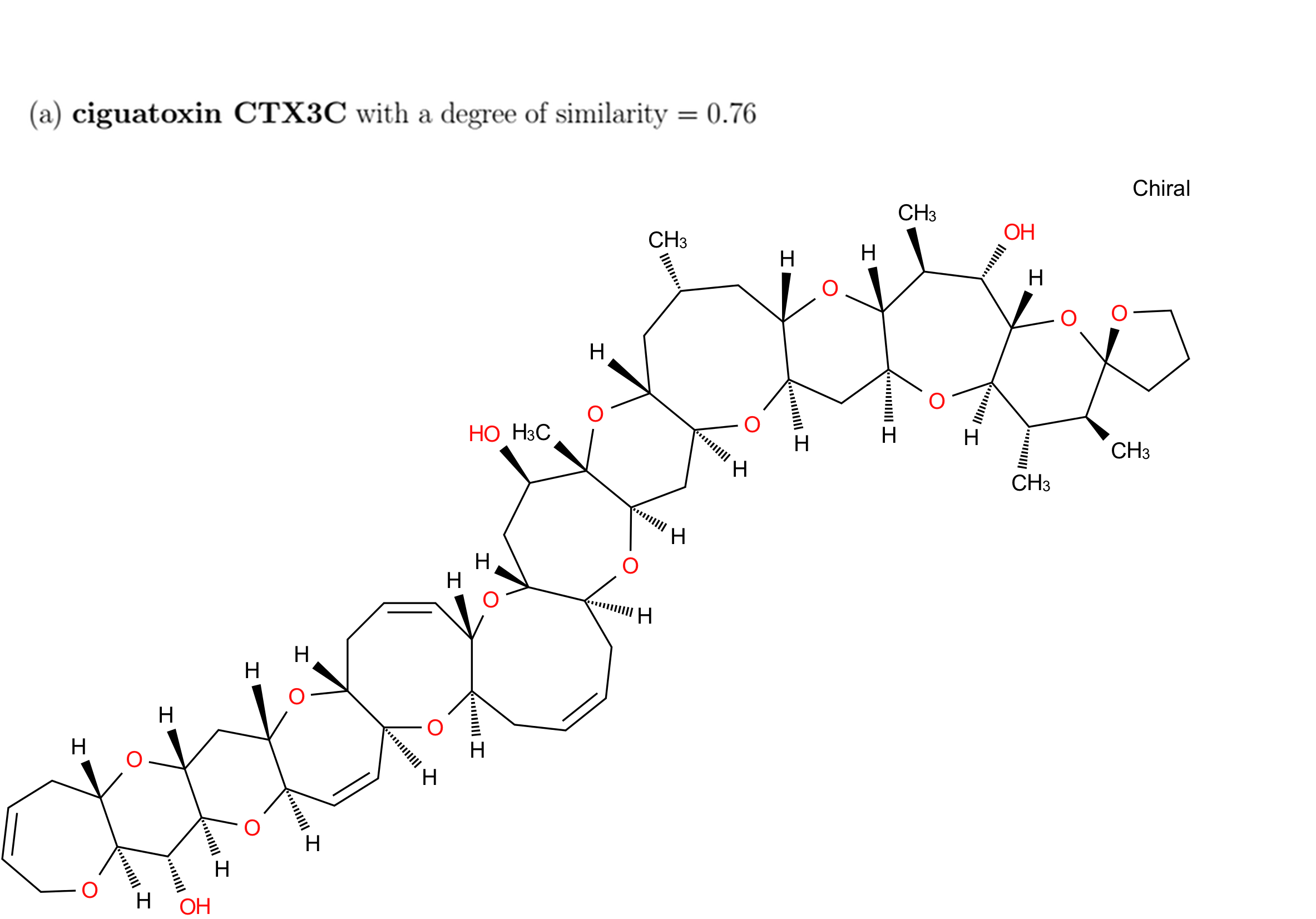}  
\includegraphics[width=2.6in]{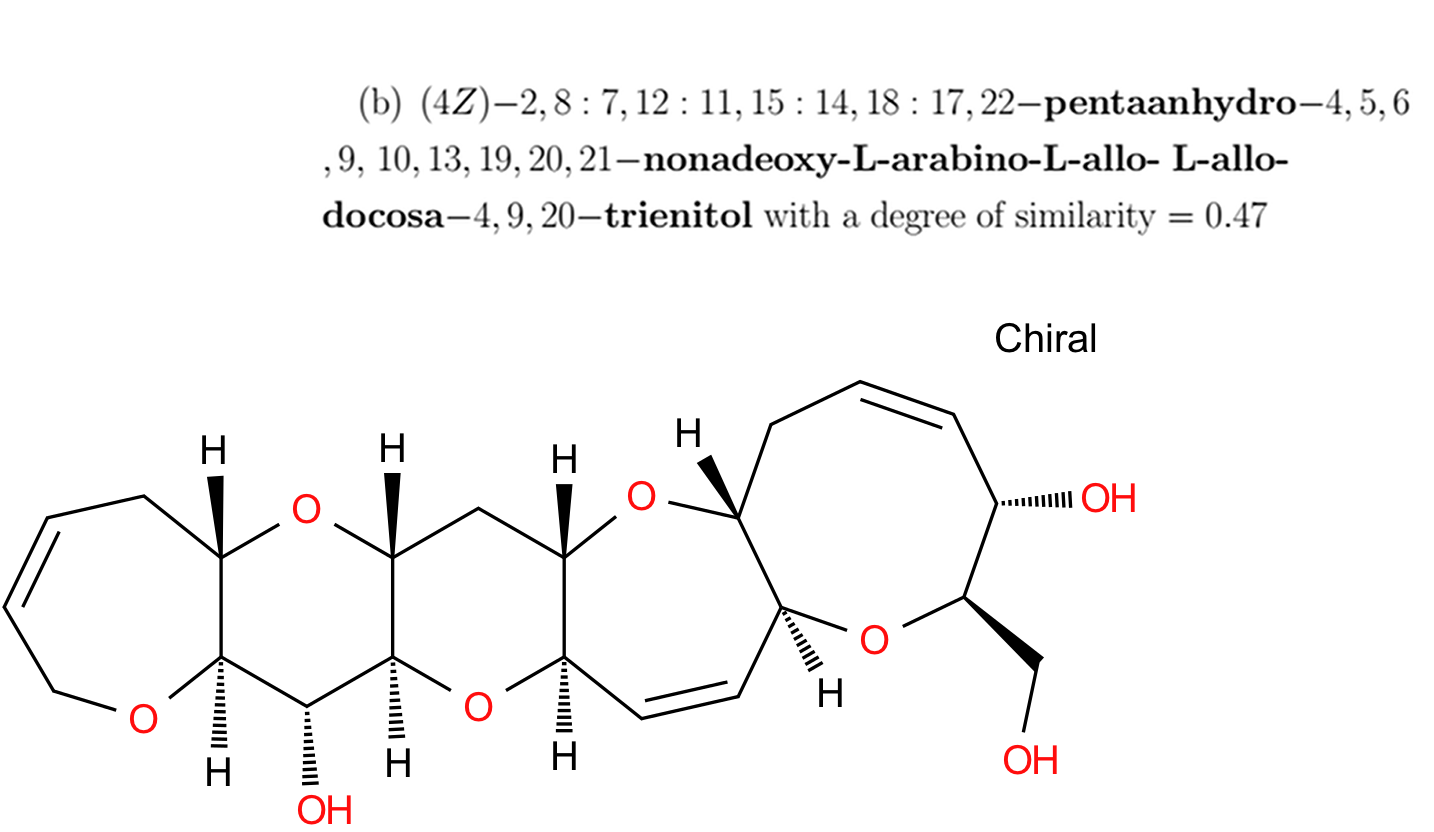} 
\end{center}
\caption{Results of similarity for brevetoxine A on GC }
 \end{figure}
 \section{Conclusion}

 The objective of this article was to propose a quickly computable measure of molecule pairwise similarity that can distinguish between pairs of molecules whose core structures (i.e., the interconnection of elementary cycles) are similar. The experiments carried out lead to several conclusions:

\begin{itemize}

\item First, the graph of cycles approach allows the similarity of all the pairs of molecules to be calculated in a reasonable time, while the molecular graph approach can require unrealistic execution times, especially for pairs of similar molecules.

\item Secondly, the similarity measures obtained by the cycle graph approach distinguish the similar pairs of molecules, and not just the isomeric molecules of the target molecules.

\item Finally, the proposed approach discriminates well the molecules very, little or not similar to a target molecule, while the approach by molecular graphs, when it can calculate similarity measures, is less discriminating.

\end{itemize}

These experiments therefore show that the proposed approach allows decision support for the determination of chemicals reactions that can be used to synthesize this target molecule from available compounds.

An extension of the proposed approach would be to be able to set the size of the cycles (parameter $j$) according to the characteristics of the molecular graph. Indeed in many cases, taking into account cycles of too large size can distort the similarity measurement because these cycles do not reflect the core structure of the molecules, while in some other cases, taking into account of such cycles is necessary to take all the core structures into account. It seems that the differentiation between these two cases depends, at least in part, on topological properties of the molecular graph, which requires further studies. Finally, the use of other metrics of similarity than the resolution of the MCES problem, for example the use of an editing distance between the cycle graphs, could also be considered.

\bibliographystyle{abbrv}
\nocite{*}
\bibliography{graph-cycles}
\end{document}